\def\@ACM@checkaffil{
    \if@ACM@instpresent\else
    \ClassWarningNoLine{\@classname}{No institution present for an affiliation}%
    \fi
    \if@ACM@citypresent\else
    \ClassWarningNoLine{\@classname}{No city present for an affiliation}%
    \fi
    \if@ACM@countrypresent\else
        \ClassWarningNoLine{\@classname}{No country present for an affiliation}%
    \fi
}
\newtheorem{theorem}{Theorem}
\newtheorem{corollary}[theorem]{Corollary}
\newtheorem{lemma}[theorem]{Lemma}
\newtheorem{remark}{Remark}
\newtheorem{definition}{Definition}
\newtheorem{example}{Example}
\newcommand{\ie}{i.e.,\xspace}
\newcommand{\eg}{e.g.,\xspace}
\newcommand{\secref}[1]{Section~\ref{#1}}
\newcommand{\figref}[1]{Figure~\ref{#1}}
\newcommand{\lemref}[1]{Lemma~\ref{#1}}
\newcommand{\thmref}[1]{Theorem~\ref{#1}}
\newcommand{\defref}[1]{Definition~\ref{#1}}
\newcommand{\exref}[1]{Example~\ref{#1}}
\newcommand{\milpmax}{\textsf{P}}
\newcommand{\set}[1]{\{#1\}}
\newcommand{\midd}{\mathrel{:}}
\newcommand{\maxP}{leximax-\phrag} %
\newcommand{\varP}{var-\phrag}
\newcommand{\seqP}{seq-\phrag}
\newcommand{\eneP}{Eneström-\phrag}
\newcommand\bxell{\bar x_{(\ell)}}
\title{Phragm\'{e}n's Voting Methods and Justified Representation}
\author{Markus Brill}
 \affiliation{%
    \institution{TU Berlin} 
    \country{Germany}}
 \affiliation{%
    \institution{University of Warwick} 
    \country{UK}
    }
 \email{markus.brill@warwick.ac.uk}
\author{Svante~Janson}
\affiliation{%
    \institution{Uppsala University} 
    \country{Sweden}
    }
 \email{svante.janson@math.uu.se}
\author{Rupert~Freeman}
 \affiliation{%
    \institution{University of Virginia} 
    \country{USA}
    }
 \email{freemanr@darden.virginia.edu}
\author{Martin~Lackner}
 \affiliation{%
    \institution{TU Wien} 
    \country{Austria}
    }
 \email{lackner@dbai.tuwien.ac.at}
\newcommand{\phrag}{Phragm\'{e}n\xspace}
\newcommand{\opti}{optimization\xspace}
\newcommand{\Opti}{Optimization\xspace}
\begin{abstract}
In the late 19th century, Swedish mathematician Edvard \phrag proposed a load-balancing approach for selecting committees based on approval ballots. We consider three committee voting rules resulting from this approach: two \opti variants---one {minimizing} the maximum load and one minimizing the variance of loads---and a sequential variant.
We study \phrag's methods from an axiomatic point of view, focusing on 
properties capturing proportional representation.
We show that the sequential variant satisfies \emph{proportional justified representation}, 
which is a rare property for committee monotonic methods.
Moreover, we show that the \opti variants satisfy \emph{perfect representation}. We also analyze the computational complexity of \phrag's methods and provide mixed-integer programming based algorithms for computing them.
\end{abstract}
\begin{document}

\begin{titlepage}

\maketitle

\vspace{2cm}

\begin{center}
   \includegraphics[width=0.5\textwidth]{p.png} 
   
   Lars Edvard \phrag (1863--1937)
\end{center} 

\end{titlepage}

\section{Introduction}

While most of the social choice literature is focused on single-winner scenarios,
recent years have witnessed an increasing interest in \emph{committee voting rules} \citep[\eg][]{elk-fal-sko-sli:c:multiwinner-rules,FSST17a,owaWinner,LaSk21b}. In this setting, a fixed-size subset of alternatives has to be selected based on the preferences of a group of voters.   
In this paper, we assume that the preferences of individual voters are given by \emph{approval ballots}, specifying which alternatives are ``approved'' by the voters. 
For an overview of research on approval-based committee elections, we refer to the recent survey by \citet{LaSk21b}.

A crucial issue in group decision making is \emph{(proportional) representation}. Informally speaking, an outcome of a decision-making process is representative if it reflects the preferences of the members of the group. 
In the context of approval-based committee elections, reasoning about representation is non-trivial. Since approval sets may overlap arbitrarily, there are many different ways in which the set of voters can be split into more or less ``cohesive'' subgroups. 
Whether a given subgroup has a justified claim to be represented in the committee depends on the size of the subgroup as well as on its level of cohesiveness.

\citet{justifiedRepresentation} and \citet{SFF+17a} have identified axiomatic properties capturing the intuitive notion that subgroups that are ``large enough'' and ``cohesive enough'' deserve to be represented in the committee:
\emph{justified representation (JR)}, \emph{proportional justified representation (PJR)}, and \emph{extended justified representation (EJR)}. While a number of standard committee voting rules have been shown to satisfy the basic requirement of JR, it turns out that the more demanding properties PJR and EJR are much harder to satisfy.

In this paper, we consider committee voting rules that are due to Swedish mathematician Edvard \phrag{} (we provide brief biographical information in \secref{sec:bio}).
\phrag phrases committee elections as load balancing problems: 
Adding a candidate to the committee incurs some \emph{load}, and this load should be shared among the voters approving this candidate. \phrag suggests choosing committees in such a way that the corresponding load distributions are as \emph{balanced} as possible, and different ways of measuring balancedness
result in different optimization objectives. This approach yields two \emph{\opti} variants, one minimizing the maximum load and one minimizing the variance of loads, and one \emph{sequential} variant, which proceeds by greedily selecting candidates so as to keep the maximum load as small as possible. In addition to the load balancing rules, \phrag also proposed a rule that adapts the principle behind Single Transferable Vote (STV) to approval ballots.   

Although \phrag's methods were proposed in the same era as \textit{Proportional Approval Voting (PAV)},\footnote{Proportional Approval Voting is a prominent committee voting rule due to Danish polymath Thorvald N.\ Thiele~\citep{Thie95a}. For a detailed comparison between PAV and \phrag's methods, we refer to 
\citet{Jans16a} and \citet{peters2019proportionality}.} they have received hardly any attention until very recently.
Since the publication of the conference version of this paper~\cite{BrillFJL17-phragmen} in 2017,
\phrag{}'s methods became increasingly central in the analysis of approval-based committee rules.\footnote{Two notable studies that predate the conference version of this paper are a  survey by \citet{Jans12a} (in Swedish) and a paper by \citet{MoOl15a} (in Catalan).}
In politics, variants of both \phrag's methods and PAV have been used in Swedish parliamentary elections (for distribution of seats within parties), and a version of one of \phrag's methods is still part of the election law, although in a minor role~\citep{Jans16a}.
Further, \phrag's sequential method is often used for the selection of ``validators'' who participate in a blockchain consensus protocol: 
In the recently introduced \textit{nominated proof-of-stake (NPoS)} mechanism, members of a blockchain community can nominate other members to become validators, and the selection of a representative set of validators plays an important role for the security of the blockchain~\citep{Polk21a,polkadot-overview,CeSt21a}.

\subsection{Results and Outline of the Paper}
After briefly reviewing related work in \secref{sec:related} and introducing some basic notation in \secref{sec:prelim}, we formally define \phrag's methods in \secref{sec:phrag-methods}. 
In \secref{sec:computation}, we analyze the computational complexity of  \phrag's methods and we provide algorithms for computing them. The algorithms for the \opti variants are based on mixed-integer linear and quadratic programming.
In \secref{sec:representation}, we consider the representation axioms mentioned above. We show that the sequential variant satisfies PJR, making it one of few committee monotonic methods with this property. Moreover, we show that the \opti variants satisfy \emph{perfect representation (PR)},
a further representation axiom introduced by \citet{SFF+17a}. The latter result provides a contrast to PAV, which is known to violate~PR.  
In \secref{sec:apportionment}, we discuss the relation between \phrag's methods and  the apportionment problem~\citep{BaYo82a}.

\subsection{A Brief Biography of \phrag}\label{sec:bio}

Lars Edvard \phrag (1863--1937)
was a Swedish mathematician, actuary and insurance executive.
He began his mathematical university studies in Uppsala in 1882, but
transferred in 1883 to Stockholm, where he became a student (and later confidant) of G\"osta Mittag-Leffler \cite{stubhaug2010gosta}.
In 1888, \phrag was appointed coeditor 
of Mittag-Leffler's journal \emph{Acta Mathematica}, where he immediately made
an important contribution by finding an error in a paper by Henri Poincar\'e on the
three-body problem. The paper had been awarded a prize in a competition that
Mittag-Leffler had persuaded King Oscar II to arrange, but \phrag found a
serious mistake when the journal had already been printed; the copies that had
been released were recalled and a new corrected version was printed.

In 1892, \phrag became a professor of mathematics at Stockholm University.
In 1897, he additionally became an actuary in 
a private insurance company. 
His  interest in actuarial science and insurance companies appears to have grown in these years,
as in 1904 he left his professorship to become the first 
head of the Swedish Insurance Supervisory Authority.
In 1908 he became director of a private insurance company, which he remained until 1933.
His involvement in mathematics is witnessed, e.g., by his attendance at the 1924 International Mathematical Congress in Toronto, where he was elected one of the vice-presidents of the International Mathematical Union \cite{imc-toronto}. \phrag also continued to be an editor of Acta Mathematica until his death in
1937.

His best known mathematical work is the \phrag-Lindel\"of principle in complex
analysis, a joint work with Finnish mathematician Ernst Lindel\"of~\cite{phragmen1908extension}.
His interest in election methods is witnessed by his publications
\cite{Phrag93,Phra94a, Phra95a, Phra96a, Phra99a}.
Moreover,
he was a member of the \textit{Royal Commission on a Proportional Election Method
1902--1903} and of a new \textit{Royal Commission on the Proportional Election Method
1912--1913}. For further information we refer the reader to the survey by \citet{Jans16a}
and to the book by \citet{stubhaug2010gosta} (in particular for his relation with Mittag-Leffler).

\section{Related Work}
\label{sec:related}

Proportional representation is an important issue in committee voting (see the influential paper by \citet{Monr95a} and the references therein) and methods ensuring representation often lead to interesting computational problems~\citep{PoBr98a,PSZ08a,LuBo11d,BSU13a}.

The problem of choosing representative committees based on approval ballots can be seen as a generalization of the classical \textit{apportionment} problem \cite{BaYo82a}. The latter setting corresponds to the special case in which candidates are arranged into party lists and each voter chooses a single list; see \secref{sec:apportionment} for details. 
Voting settings between apportionment and approval-based committee voting have also been studied~\citep{BGP+22a}.

For the setting of approval-based committee voting~\citep{Kilg10a,LaSk21b}, \citet{justifiedRepresentation} proposed two representation axioms: \emph{justified representation (JR)} and its strengthening \emph{extended justified representation (EJR)}. 
Later, \citet{SFF+17a} observed that EJR is not compatible with what they call \emph{perfect representation} and proposed an axiomatic property, \emph{proportional justified representation (PJR)}, that is compatible. EJR implies PJR, which in turn implies JR.  

\citet{justifiedRepresentation} and \citet{SFF+17a} showed that most common committee voting rules fail EJR and PJR. A notable exception is Thiele's PAV~\citep{Thie95a}, which satisfies EJR (and thus PJR). Interestingly, variants of PAV based on different weight vectors fail both EJR and PJR (and even weaker proportionality requirements) \cite{justifiedRepresentation,BLS18a}.
Moreover, a greedy approximation algorithm for PAV known as \emph{sequential PAV} or \emph{reweighted approval voting} fails JR (and consequently PJR and EJR) \cite{justifiedRepresentation,SFF+17a}.

Computing the outcome of PAV is NP-hard~\citep{owaWinner,AGG+15a} and thus not feasible in polynomial time unless $\text{P}=\text{NP}$. Prior to our work, it had remained an open question whether there exist polynomial-time computable rules satisfying EJR or PJR. \phrag{}'s sequential rule, as we show in this paper, is polynomial-time computable and satisfies~PJR.

Recent work has established that even EJR can be guaranteed by a polynomial-time voting rule.
This was first shown by \citet{AEHLSS-polyejr}.
Later, \citet{peters2019proportionality} presented the Method of Equal Shares (MES), which is also polynomial-time computable and satisfies EJR. Interestingly, MES is based on the same principle as \phrag's sequential method and shares some of its desirable properties (such as laminar proportionality and priceability \cite{peters2019proportionality}).
None of these rules, however, are committee monotonic,\footnote{A committee voting rule is \emph{committee monotonic} if increasing the committee size results in a winning committee that is a superset of the previously winning committee.
An example showing that MES violates committee monotonicity can be found in the survey by~\citet{LaSk21b}. We are not aware of a formal proof that the rules by \citet{AEHLSS-polyejr} fail committee monotonicity, but the way they are defined makes this claim very plausible.} i.e., an increase in the committee size by one may result in a completely different committee.
In many settings, committee monotonicity is highly desirable (e.g., when generating rankings~\citep{SLB+17a,IsBr21b,RST22a}), and thus \phrag's sequential method---which is committee monotonic by definition---has gained much attention in recent years.
\phrag's sequential method also satisfies further monotonicity axioms~\cite{Jans16a,SaFi17a}.

The maximin support method, introduced by \citet{SFFB22b}, is closely related to \phrag's sequential method and shares many of its axiomatic properties (including PJR and committee monotonicity). The optimization variant of the maximin support method coincides with one of the optimization variants of \phrag's methods, and yields an equivalent formulation of the latter in terms of maximin support~\citep{SFFB22b}. 
An interesting distinction between \phrag's sequential rule and the maximin support method concerns their ability to approximate the optimal solution of the maximin support problem~\citep{CeSt21a}.

Proportional representation has also been studied in settings where voters have ordinal preferences over candidates~\citep{elk-fal-sko-sli:c:multiwinner-rules,FSST17a} and in \textit{participatory budgeting}, a generalization of committee elections where candidates have costs and the set of selected candidates needs to satisfy a budget constraint~\citep{AZSh20a,PPP21a}.
Different variants of \phrag's methods have been generalized to those settings~\citep{Jans16a,aziz2018proportionally,AzLe20a}. Further generalizations of \phrag's methods have been considered in the context of degressive and regressive proportionality~\citep{JaSk22a} and in the context of perpetual voting~\citep{LaMa23a}.

\section{Preliminaries}
\label{sec:prelim}

We consider a social choice setting with a finite set $N=\{1,\ldots, n\}$ of
\emph{voters} and a finite set $C$ of \emph{candidates}. Throughout the paper we let $m =
|C|$ denote the number of candidates and $n=|N|$ the number of voters. The
preferences of each voter $i\in N$ are given by a subset $A_i\subseteq C$, representing the subset of candidates that the voter approves
of. We refer to the list $A = (A_1,\ldots, A_n)$ as the {\em preference
  profile}.
For a candidate $c \in C$, we let $N_c$ denote the set of voters approving $c$,
\ie  $N_c = \{i\in N \midd c \in A_i\}$.
To avoid trivialities, we assume that $N_c\neq\emptyset$ for all $c \in C$.
 
We want to select a subset consisting of exactly $k$ candidates, for a given natural number $k \le m$. 
An {\em approval-based committee voting rule} (henceforth simply
\emph{rule}) maps an instance $(A, k)$ to a subset $S \subseteq C$ of
size~$k$, the {\em committee}. 
In general, there may be ties, and we then allow the rule to yield several
choices, so formally the rule is a map from instances to non-empty sets of
committees. 

Finally, for a tuple of real numbers $z=(z_1,\dots,z_n)$, we let $z_{(\ell)}$ denote the $\ell$-th largest element in~$z$,
so that $z_{(1)} \ge z_{(2)} \ge \dots \ge z_{(n)}$.

\section{\phrag's Methods}
\label{sec:phrag-methods}

The main idea behind \phrag's methods is to identify committees whose ``support'' is distributed as evenly as possible among the electorate. \phrag used different formulations for explaining his methods; we refer the reader to the survey by \citet{Jans16a} for an overview and more details. In this paper, we adopt the formulation from the 1899 paper~\citep{Phra99a}.
In this formulation, every candidate in the committee is thought of as incurring one unit of ``\emph{load},'' and the load incurred by candidate $c$ needs to be distributed among the voters in $N_c$. The goal is to find a committee of size $k$ for which the corresponding load distribution is as balanced as possible.

Formally, a \emph{load distribution} is a two-dimensional array $x = (x_{i,c})_{i \in N, c \in C}$ satisfying the following four constraints:
\begin{align}
& 0 \le x_{i,c} \le 1 &\text{for all $i \in N$ and $c \in C$}\label{eq:optimal-cond3a}\\
& x_{i,c} = 0 &\text{ if $c\notin A_i$}\label{eq:optimal-cond3}\\
&\sum_{i\in N} \sum_{c\in C} x_{i,c} = k &\label{eq:optimal-cond1}\\
&\sum_{i\in N} x_{i,c} \in \{0,1\} &\text{ for all $c\in C$}\label{eq:optimal-cond2}
\end{align}
Here, $x_{i,c}$ corresponds to the load that voter $i$ receives from candidate $c$. Constraint \eqref{eq:optimal-cond3} ensures that the load incurred by candidate $c$ is distributed among voters in $N_c$ only, and constraints~(\ref{eq:optimal-cond1}) and~(\ref{eq:optimal-cond2}) ensure that $x$ corresponds to a size-$k$ committee $\{c \in C \midd \sum_{i \in N} x_{i,c} = 1\}$.
 
For a load distribution $x$, we let $\bar x_i$ denote the total load of voter $i \in N$, \ie $\bar x_i=\sum_{c\in C} x_{i,c}$, and we refer to $(\bar x_1, \ldots, \bar x_n)$ as the vector of \emph{voter loads}. Using this notation, constraint~(\ref{eq:optimal-cond1}) reads $\sum_{i\in N} \bar x_i = k$. Note that constraint~(\ref{eq:optimal-cond1}) implies that the \emph{average} voter load is $\frac{k}{n}$.

There are different ways of measuring how \emph{balanced} a given load distribution is, each giving rise to a different optimization objective. One such objective is to minimize the maximum load assigned to a voter, \ie $\min_x \max_{i\in N}\bar x_i$. (This is equivalent to minimizing the maximum difference between a voter load and the average voter load.) Obviously, the average voter load $\frac{k}{n}$ is a lower bound on the maximum voter load, and we call a load distribution $x$ \emph{perfect} if $\bar x_i = \frac{k}{n}$ for all $i \in N$. Another objective is to minimize the \emph{variance} of voter loads, \ie the sum of squared distances from the average voter load. Again, a perfect load distribution is optimal for this objective. 

We further distinguish between ``\opti'' methods, where we solve a global optimization problem to find a load distribution optimizing the objective, and ``sequential'' methods, where we iteratively construct a load distribution, in each round greedily choosing a candidate optimizing the objective at that iteration. 

In this paper, we focus on three rules: the \opti methods \emph{\maxP} and \emph{\varP}---minimizing the maximum voter load and the variance of voter loads, respectively---and the sequential method \emph{\seqP}, which greedily minimizes the maximum voter load. 
For completeness, we also consider the \mbox{\emph{\eneP}} method (see \secref{sec:enestroem}).

The method \seqP was introduced by \phrag in several papers
\citep{Phra94a,Phra95a,Phra96a,Phra99a}, and it is the
variant that he proposed to be used in actual elections. 
\phrag defined this method as a generalization of D'Hondt's apportionment method to the case without party lists (see \secref{sec:apportionment}). 
\Opti variants and the objective of minimizing the variance are discussed in the 1896 paper~\citep{Phra96a}.

\subsection{\Opti Variants}
\label{sec:direct}

We start by defining the \opti variants.
The first optimization variant selects committees corresponding to load distributions minimizing the maximum voter load. 
In case that two or more committees have the same (minimal) maximum load, we employ a specific way of breaking ties.
This is because it might be the case that for two load distributions $x$ and $y$, although $\max_{i\in N}\bar x_i=\max_{i\in N}\bar y_i$, one load distribution is clearly preferable to the other.
\begin{example}\label{ex:ex1}
Let $C = \{ a, b, c \}$, $k=2$, and 
$A=(\{a\},\{a\},\{b\},\{c\})$. Any committee of size 2 contains either $b$ or $c$, which are approved by only one voter each, so the maximum load is 1 for all committees.
However, the committees containing $a$ represent three voters, while the committee $\{ b, c \}$ only represents two.
\end{example}

In order to refine the set of winning committees, we compare two vectors of voter loads according to the \textit{leximax} ordering.\footnote{The leximax ordering is defined analogously to the more commonly used \textit{leximin} ordering (see, \eg \citet{Moul88a}, Definition 1.1). 
In the literature, the leximax ordering is referred to as ``lexicographic minimax'' by \citet{Ogry97a} and as ``lexicographical'' by \citet{Schm69a}.}

\begin{definition}\label{def:leximax}
For $y=(y_1,\dots,y_n)$ and $z=(z_1,\dots,z_n)$, $y$ is \emph{leximax-smaller} than~$z$, denoted $y \mathbin{\dot<} z$, if there exists $j\leq n$ such that $y_{(j)} < z_{(j)}$ and $y_{(i)} = z_{(i)}$ for all $i\leq j-1$.
\end{definition}

We are now ready to define the first optimization variant. 

\smallskip
\noindent \textbf{\maxP:}
The rule \maxP selects all committees corresponding to load distributions $x$  such that $(\bar x_{1},\dots, \bar x_{n})$ is leximax-optimal, \ie minimal with respect to $\mathbin{\dot<}$.

\smallskip
As we will see in \secref{sec:max-results}, leximax tie-breaking is necessary in order to guarantee strong representation properties.

\smallskip
The second optimization variant is based on a different optimization objective. 

\smallskip
\noindent \textbf{\varP:}
The rule \varP selects all committees corresponding to load distributions minimizing $\sum_{i\in N} \bar x_i^{\,2}$. 

\smallskip 
Minimizing $\sum_{i\in N} \bar x_i^{\,2}$ indeed minimizes the variance of $(\bar x_1, \dots, \bar x_n)$, as is well-known:
Since $\frac{1}{n} \sum_{i\in N} \bar x_i = \frac{k}{n}$, it holds that the variance of $(\bar x_1, \dots, \bar x_n)$ equals
\begin{align*}
\frac{1}{n}\sum_{i\in N} \left( \bar x_i -\frac{k}{n}\right)^2
&= \frac{1}{n}\sum_{i\in N} \left( \bar x_i^{\, 2}-2 \bar x_i\cdot \frac{k}{n} + \frac{k^2}{n^2}\right) \\
&= \frac{1}{n}\sum_{i\in N} \bar x_i^{\, 2} - \frac{1}{n} \cdot 2 k\cdot \frac{k}{n} + \frac{1}{n}\cdot n\cdot \frac{k^2}{n^2} \\
&= \frac{1}{n}\sum_{i\in N} \bar x_i^{\, 2} - \frac{k^2}{n^2}.
\end{align*}
When minimizing this expression, we can ignore multiplicative or additive constants ($n$ and $k$) and thus equivalently minimize $\sum_{i\in N} \bar x_i^{\,2}$.

The following example demonstrates that the maximum voter load under \varP may indeed be greater than under \maxP.

\begin{example}\label{ex:varmax}
	Let $C = \set{a,b,c,d}$, $k=3$, and consider the $5$-voter preference profile 
    given by $A_1=\{a\}$, $A_2=\{b\}$, $A_3=\{b,c\}$, $A_4=\{a,b,c\}$, $A_5=\{d\}$.
	For this instance, \maxP selects the committee $\{a,b,c\}$ and \varP selects the committee $\{a,b,d\}$. Optimal load distributions corresponding to these committees are illustrated in \figref{fig:varmax}.
	Load distributions minimizing the maximum voter load (like the one illustrated by the first diagram in \figref{fig:varmax}) satisfy $\max_{i\in N} \bar x_i = \frac{3}{4}$ and $\sum_{i\in N} \bar x_i^2 = 4 (\frac{3}{4})^2 = \frac{9}{4}$, and 
	the load distribution minimizing the variance of voter loads (illustrated by the second diagram in \figref{fig:varmax}) satisfies $\max_{i\in N} \bar x_i = 1$ and $\sum_{i\in N} \bar x_i^2 = 4 (\frac{1}{2})^2 + 1^2 = 2$. 
\end{example}

\newcommand{\clrstr}{20}
\begin{figure}[tb]
	\centering
\begin{tikzpicture}[yscale=.5,xscale=.5]
	\draw[fill=orange!\clrstr] (0,7) rectangle node {$a$} +(6,1); 
	\draw (0,6)[fill=violet!\clrstr] rectangle node {$b$} +(6,1);
	\draw[fill=teal!\clrstr] (0,5) rectangle node {$c$} +(6,1);
	\draw[fill=orange!\clrstr] (0,4) rectangle node {$a$} +(2,1); \draw[fill=violet!\clrstr] (2,4) rectangle node {$b$} +(2,1); \draw[fill=teal!\clrstr] (4,4) rectangle node {$c$} +(2,1);
	\draw (0,3) rectangle node {} +(0,1); 
	\draw[->] (0,3)--(8.5,3);
	\draw (0,3) -- (0,2.8)
	node[anchor=north] {$0$};
	\draw (2,3) -- (2,2.8)
	node[anchor=north] {$\frac{1}{4}$};
	\draw (4,3) -- (4,2.8)
	node[anchor=north] {$\frac{1}{2}$};
	\draw (6,3) -- (6,2.8)
	node[anchor=north] {$\frac{3}{4}$};
	\draw (8,3) -- (8,2.8)
	node[anchor=north] {$1$};
	
	\draw (-0.5,7.5) node[left]{$A_1 = \{a\}$};
	\draw (-0.5,6.5) node[left]{$A_2 = \{b \}$};
	\draw (-0.5,5.5) node[left]{$A_3 = \{b, c\}$};
	\draw (-0.5,4.5) node[left]{$A_4 = \{a, b, c\}$};
	\draw (-0.5,3.5) node[left]{$A_5 = \{d\}$};
\end{tikzpicture}
\hfill
\begin{tikzpicture}[yscale=.5,xscale=.5]
	\draw[fill=orange!\clrstr] (0,7) rectangle node {$a$} +(4,1); 
	\draw[fill=violet!\clrstr] (0,6) rectangle node {$b$} +(4,1);
	\draw[fill=violet!\clrstr] (0,5) rectangle node {$b$} +(4,1);
	\draw[fill=orange!\clrstr] (0,4) rectangle node {$a$} +(4,1); 
	\draw[fill=magenta!\clrstr] (0,3) rectangle node {$d$} +(8,1); 
	
	\draw[->] (0,3)--(9,3);
	\draw (0,3) -- (0,2.8)
	node[anchor=north] {$0$};
	\draw (4,3) -- (4,2.8)
	node[anchor=north] {$\frac{1}{2}$};
	\draw (8,3) -- (8,2.8)
	node[anchor=north] {$1$};
	
	\draw (-0.5,7.5) node[left]{$A_1 = \{a\}$};
	\draw (-0.5,6.5) node[left]{$A_2 = \{b \}$};
	\draw (-0.5,5.5) node[left]{$A_3 = \{b, c\}$};
	\draw (-0.5,4.5) node[left]{$A_4 = \{a, b, c\}$};
	\draw (-0.5,3.5) node[left]{$A_5 = \{d\}$};
\end{tikzpicture}
\caption{Illustration of Example \ref{ex:varmax}. The diagram on the left illustrates a load distribution minimizing the maximum voter load $\max_{i\in N} \bar x_i$, and the diagram on the right illustrates the unique load distribution minimizing $\sum_{i\in N} \bar x_i^2$.}
\label{fig:varmax}
\end{figure}
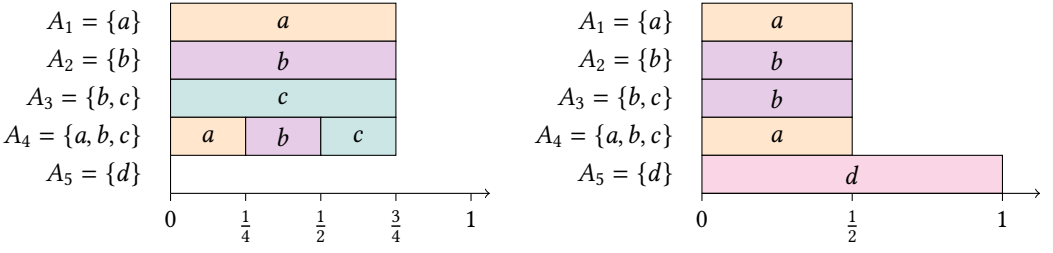

\begin{remark} \label{rem:minimaxP}
Rather than minimizing the maximum load, one could also aim to \emph{maximize the minimum voter load}. This variant  would select committees minimizing the number of unrepresented voters, even in the face of large cohesive groups of voters. Therefore, this method will not do well in terms of the representation axioms considered in \secref{sec:representation}. For this reason, we do not consider it further in this paper.
\end{remark}

\subsection{Sequential Method}
\label{sec:sequential}

We now introduce the sequential method, which can be seen as a greedy algorithm for minimizing the maximum voter load.  

\smallskip
\noindent \textbf{\seqP:}
The rule \seqP starts with an empty committee and
iteratively adds candidates, always choosing the candidate that minimizes
the (new) maximum voter load 
(under the assumption that previously assigned loads cannot be redistributed).
Let $\bar x_i^{(j)}$ denote the voter loads after round $j$.
At first, all voters have a load of $0$, i.e., $\bar x_i^{(0)}=0$ for all
$i\in N$. 
In each round, we keep the already assigned loads, but we may
further increase them and give the additional load to a new candidate~$c$.
In other words, we require $\bar x_i^{(j)}\ge \bar x_i^{(j-1)}$ for all $i$,
with equality unless $i\in N_c$. Moreover, 
the sum of the loads added in the round should be $1$.
(Hence, the total load after $j$ rounds is $j$, which is the sequential
version of constraint \eqref{eq:optimal-cond1}.)
We select the candidate $c$ and the loads $\bar x_i^{(j)}$
that satisfy these conditions and minimize $\max_i \bar x_i^{(j)}$.
(If there are several candidates achieving the minimum, we use a fixed
tie-breaking rule to decide which candidate to add.) 

\smallskip

The candidates and loads chosen by this procedure have the following
properties. 

\begin{lemma}\label{lem:seq}
In round $j$, 
given the voter loads $\bar x_i^{(j-1)}$ for all $i\in N$
and a candidate $c$ that was not selected in earlier rounds,
let
\begin{align}\label{seq1}
s_c^{(j)} = \frac{1 + \sum_{i\in N_c} \bar x_i^{(j-1)}}{|N_c|}.
\end{align}
Then, the maximum load $s^{(j)}=\max_i \bar x_i^{(j)}$
after round $j$ will be
\begin{align}\label{seq0}
  s^{(j)}=\min_c   s_c^{(j)},
\end{align}
taking the minimum over the candidates that remain in round $j$,
and a candidate $c$ is elected that achieves the minimum in \eqref{seq0}.
Moreover, if $c$ is elected, the new loads after round $j$ will be
\begin{align}\label{seq2}
\bar x_i^{(j)} = \begin{cases}
s_c^{(j)} & \text{if } i \in N_c\\
\bar x_i^{(j-1)} & \text{otherwise.}
\end{cases}
\end{align}

Furthermore, both individual loads and
the maximum load sequence are weakly increasing:
\[0\le \bar x_i^{(1)} \le \ldots \le \bar x_i^{(k)} \text{ for every $i \in N$, and }
0\le s^{(1)} \le \ldots \le s^{(k)}.\]
\end{lemma}

\begin{proof}
We use induction on $j$, so we assume that the claims hold for all
rounds before~$j$.
We claim first that the following inequalities hold
for every remaining candidate $c$ and for all $i\in N$:
\begin{align}  \label{seq04}
s_c^{(j)} \ge s^{(j-1)} \ge \bar x_i^{(j-1)}.
\end{align}
It is obvious that \eqref{seq04} holds for $j=1$.
If $j>1$, then, by the induction hypothesis,
$\bar x_i^{(j-1)}\ge \bar x_i^{(j-2)}$ for every $i$.  Hence, \eqref{seq1} 
yields
$s_c^{(j)} \ge s_c^{(j-1)} $ for every remaining candidate $c$.
Furthermore,
\eqref{seq0} (for $j-1$) yields
$s_c^{(j-1)}  \ge s^{(j-1)} $ for every remaining candidate $c$,
and thus \eqref{seq04} holds in this case too, 
recalling the definition $s^{(j-1)}=\max_i \bar x_i^{(j-1)}$.

Next, since
\eqref{seq04} holds,  for any remaining candidate~$c$, 
the assignment \eqref{seq2} satisfies
$\bar x_i^{(j)}\ge\bar x_i^{(j-1)}$ for every $i$, with equality if $i\notin N_c$.
Moreover, the sum of the added loads is, by \eqref{seq2} and \eqref{seq1},
\begin{align}\label{seq05}
  \sum_{i\in N_c}\left(\bar x_i^{(j)}-\bar x_i^{(j-1)}\right)
=|N_c| s_c^{(j)}-\sum_{i\in N_c}\bar x_i^{(j-1)}
=1.
\end{align}
Thus, \eqref{seq2} yields a valid load distribution for round $j$.
It follows from \eqref{seq04} that its maximum load is $s_c^{(j)}$.

Conversely, any distribution of an additional load 1 on the voters in $N_c$
will give these voters an average load of $s_c^{(j)}$, and thus the maximum
load will be at least $s_c^{(j)}$ (and strictly greater for loads differing
from \eqref{seq2}).

Hence, the maximum load after round $j$ is minimized by one of the
assignments \eqref{seq2}, where obviously $c$ should be chosen to minimize
$s_c^{(j)}$.
This proves \eqref{seq0}, and the remaining assertions follow.
\end{proof}

Note that \eqref{seq1}--\eqref{seq2} (together with a
tie-breaking rule) give a simple polynomial-time algorithm for computing the outcome of \seqP: 
In each round $j$, compute $s_c^{(j)}$ for all remaining candidates $c$, select a candidate minimizing this quantity (potentially using the tie-breaking rule), and update voter loads according to \eqref{seq2}. We analyze the running time of this algorithm in more detail in \secref{sec:computation}.

\smallskip

\citet{Phra99a} illustrates his sequential method by imagining the different ballots as represented by cylindrical vessels, with base area proportional to the number of voters casting that ballot.
The already elected candidates are represented by a liquid that is fixed in the vessels, and the additional unit of load incurred by adding another candidate to the committee is represented by pouring 1 unit of a liquid into the vessels representing voters approving this candidate.
The liquid then distributes among these vessels so that the height of the liquid is the same in all vessels. This is to be tried for each candidate; the candidate that requires the smallest height is elected, and the corresponding amounts of liquid are added to the vessels and fixed there.

An alternative interpretation of the sequential method is in terms of money: Imagine that voters have initially empty bank accounts and earn money continuously (at a constant rate) over time. As soon as the approvers of a candidate jointly own one dollar, they ``buy'' this candidate and their bank accounts are reset to zero.  This interpretation was utilized by \citet{peters2019proportionality} when introducing the Method of Equal Shares. 

\phrag's sequential method is committee monotonic by definition. 
As mentioned above, \seqP can be seen as a (polynomial-time computable) heuristic to approximate the \opti method \maxP. Unsurprisingly, the load distribution constructed by \seqP might not be optimally balanced.\footnote{The approximability of \maxP has recently been studied by \citet{CeSt21a}, who showed, in particular, that \seqP does not offer a constant-factor approximation guarantee.} 

\begin{example} \label{ex:seq}
Consider again the instance from \exref{ex:varmax}. In the first round, we have
$s_b^{(1)}=\frac{1}{3}$, $s_a^{(1)}=s_c^{(1)}=\frac{1}{2}$, and
$s_d^{(1)}=1$. Therefore, candidate $b$ is chosen. 
In the second round, we have $s_a^{(2)} = \frac{2}{3}$, $s_c^{(2)}=\frac{5}{6}$,
and $s_d^{(2)}=1$, so candidate $a$ is chosen. In the third round, there is a tie between~$c$ and $d$ because $s_c^{(3)}=s_d^{(3)}=1$. Thus, the final committee is either $\set{a,b,c}$ or $\set{a,b,d}$, depending on which tie-breaking rule is used. 
\figref{fig:seq} illustrates the resulting load distributions, both of which are suboptimal for the optimization problems corresponding to \maxP and \varP.
\end{example}

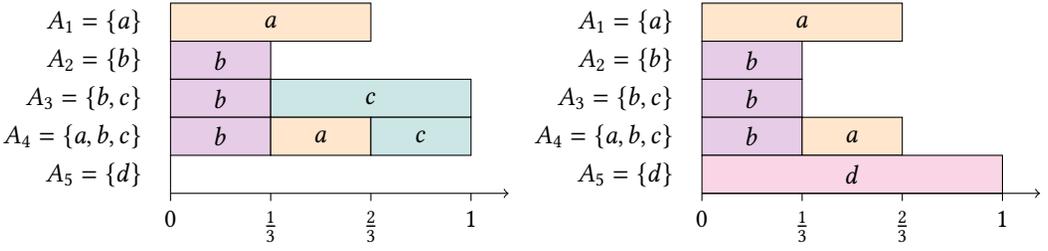
\begin{figure}[tb]
	\centering
\begin{tikzpicture}[yscale=.5,xscale=.5]
	\draw[fill=orange!\clrstr] (0,7) rectangle node {$a$} +(16/3,1); 
	\draw[fill=violet!\clrstr] (0,6) rectangle node {$b$} +(8/3,1);
	\draw[fill=violet!\clrstr] (0,5) rectangle node {$b$} +(8/3,1); \draw[fill=teal!\clrstr] (8/3,5) rectangle node {$c$} +(16/3,1); 
	\draw (0,4)[fill=violet!\clrstr] rectangle node {$b$} +(8/3,1); \draw[fill=orange!\clrstr] (8/3,4) rectangle node {$a$} +(8/3,1); \draw[fill=teal!\clrstr] (16/3,4) rectangle node {$c$} +(8/3,1);
	\draw (0,3) rectangle node {} +(0,1); 
	
	\draw (8/3,3) -- (8/3,2.8)
	node[anchor=north] {$\frac{1}{3}$};
	\draw (16/3,3) -- (16/3,2.8)
	node[anchor=north] {$\frac{2}{3}$};
	\draw (0,3) -- (0,2.8)
	node[anchor=north] {$0$};
	\draw (8,3) -- (8,2.8)
	node[anchor=north] {$1$};
	\draw[->] (0,3)--(9,3);
	
	\draw (-0.5,7.5) node[left]{$A_1 = \{a\}$};
	\draw (-0.5,6.5) node[left]{$A_2 = \{b \}$};
	\draw (-0.5,5.5) node[left]{$A_3 = \{b, c\}$};
	\draw (-0.5,4.5) node[left]{$A_4 = \{a, b, c\}$};
	\draw (-0.5,3.5) node[left]{$A_5 = \{d\}$};
\end{tikzpicture}
\hfill
\begin{tikzpicture}[yscale=.5,xscale=.5]
	\draw[fill=orange!\clrstr] (0,7) rectangle node {$a$} +(16/3,1); 
	\draw[fill=violet!\clrstr] (0,6) rectangle node {$b$} +(8/3,1);
	\draw[fill=violet!\clrstr] (0,5) rectangle node {$b$} +(8/3,1);
	\draw[fill=violet!\clrstr] (0,4) rectangle node {$b$} +(8/3,1); \draw[fill=orange!\clrstr] (8/3,4) rectangle node {$a$} +(8/3,1); 
	\draw[fill=magenta!\clrstr] (0,3) rectangle node {$d$} +(8,1); 
	
	\draw (8/3,3) -- (8/3,2.8)
	node[anchor=north] {$\frac{1}{3}$};
	\draw (16/3,3) -- (16/3,2.8)
	node[anchor=north] {$\frac{2}{3}$};
	\draw (0,3) -- (0,2.8)
	node[anchor=north] {$0$};
	\draw (8,3) -- (8,2.8)
	node[anchor=north] {$1$};
	\draw[->] (0,3)--(9,3);
	
	\draw (-0.5,7.5) node[left]{$A_1 = \{a\}$};
	\draw (-0.5,6.5) node[left]{$A_2 = \{b \}$};
	\draw (-0.5,5.5) node[left]{$A_3 = \{b, c\}$};
	\draw (-0.5,4.5) node[left]{$A_4 = \{a, b, c\}$};
	\draw (-0.5,3.5) node[left]{$A_5 = \{d\}$};
\end{tikzpicture}
\caption{Illustration of Example \ref{ex:seq}. The diagram on the left (respectively, right) illustrates the load distributions obtained by \seqP with ties broken in favor of candidate $c$ (respectively, $d$).}
\label{fig:seq}
\end{figure}

One can also define a sequential version of \varP, by in each iteration selecting a candidate minimizing the variance of the resulting load distribution~\citep{Mora16a}. 
This variant does not fare well in terms of the representation axioms considered in \secref{sec:representation}, and we therefore do not consider it any further.

\subsection{\eneP Method}
\label{sec:enestroem}

In addition to the methods described in the previous sections, there is another rule that is attributed, at least partially, to \phrag.\footnote{For details on the origin of this rule, see footnote 38 in the paper by \citet{Jans16a}, who refers to this method as \textit{\phrag's first method}. \citet{SEL17a} refer to the method as \textit{\phrag-STV}.}  Following \citet{camps2019method}, we refer to this method as \textit{\eneP}.

The method predates the load balancing methods and is similar in spirit to single transferable vote (STV) methods~\citep{Tide95a}. It uses a quota $q$, which is defined either as the \textit{Hare quota} $q_H = \frac{n}{k}$ or as the \textit{Droop quota} $q_D  = \frac{n}{k+1}$. The choice between $q_H$ and $q_D$ does not affect the axiomatic performance of the rule with respect to the properties studied in this paper (see Table \ref{tbl:jr2}).
While \eneP is indistinguishable from \seqP with respect to the representation properties studied in Section~\ref{sec:representation}, a crucial difference is that \eneP is not committee monotonic \cite{camps2019method}.

\smallskip
\noindent \textbf{\eneP:}
Initially, all voters have a voting weight of~$1$. Each ballot is counted fully, with its present voting weight, for each unelected candidate on the ballot. In each round, a candidate with maximum weighted approval score is chosen and the voting weights of voters approving this candidate are reduced: If the maximum weighted approval score $v$ is strictly greater than the quota (i.e., $v > q$), then each of these ballots has its voting power multiplied by $\frac{v-q}{v}$; 
if $v \le q$, then these ballots all get voting power $0$ (and are thus ignored in the sequel). This is repeated until the desired number of candidates are elected.

\smallskip

Note that the total voting weight of all voters is decreased by $(q/v) \cdot v = q$ each time, as long as some candidate reaches the quota. 
This rule has been extensively analyzed by \citet{camps2019method} (mostly using $q_D$). Independently, it has been studied by \citet{SEL17a} (using $q_H$). In the following example, we use $q_H$. 

\begin{example} \label{ex:eneP}
Consider again the instance from \exref{ex:varmax}. We have $q_H=\frac{n}k=\frac{5}3$. 
In the first round, candidate $b$ is chosen with a (weighted) approval score of $3$. Since $3>q_H$, the voting power of the three voters approving $b$ is multiplied by $\frac{3-q_H}{3}=\frac{4}9$. 
In the second round, the weighted approval scores of the remaining candidates are $1+\frac{4}9 = \frac{13}9$ for $a$,  
$\frac{4}9 +\frac{4}9 = \frac{8}9$ for $c$, 
and $1$ for $d$. Therefore, candidate $a$ is chosen. 
Since $\frac{13}9 \le q_H$, both voters approving $a$ have their voting power reduced to $0$.
In the third and final round, the weighted approval score of $c$ is $\frac{4}9$ and candidate $d$ is chosen with a weighted approval score of $1$.
\end{example}

\section{Computational Aspects}
\label{sec:computation}

In this section, we study the computational complexity of \phrag's methods, and we provide algorithms for finding winning committees.
\citet{SFF+17a} have shown that every rule satisfying \emph{perfect representation} (see \secref{sec:representation}) is NP-hard to compute; this essentially follows from earlier work by \citet{PSZ08a}. 
Since we show that \maxP 
and \varP both satisfy this condition (Theorems~\ref{thm:maxP-PR} and~\ref{thm:varP-PR}), it follows that there do not exist polynomial-time algorithms for computing a committee for either of these rules, unless 
$\text{P}=\text{NP}$.

We complement these hardness results by considering two basic decision problems.  
\textsc{\maxP} asks whether an instance allows a load distribution~$x$ such that $(\bar x_{1},\dots, \bar x_{n}) \mathbin{\dot<} (y_1,\dots,y_n)$ for some given $n$-tuple $(y_1,\dots,y_n) \in \mathbb{R}_{\ge 0}^n$.
\textsc{\varP} asks whether an instance allows a load distribution $x$ such that $\sum_{i\in N} \bar x_{i}^{\,2} < \alpha$ for some given threshold value $\alpha > 0$. 
Both problems can be interpreted as asking whether a given load distribution is optimal.
We show that both problems are NP-complete even for rather restricted instances. 
For a preference profile $A$, let $s(A)$ denote the maximum number of candidates a voter approves, 
and let $d(A)$ denote the maximum number of voters that approve a candidate.

\begin{theorem}
The decision problems \textsc{\maxP} and \textsc{\varP} are NP-complete, even restricted to instances with $s(A)=2$ and $d(A)=3$.
\label{thm:phragmen-np}
\end{theorem}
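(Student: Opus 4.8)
The plan is to prove membership in NP and NP-hardness separately, handling both problems with a single, uniform reduction from \textsc{Independent Set} on cubic (3-regular) graphs, which is a classical NP-complete problem. For membership in NP, observe that a valid load distribution $x$ determines its committee via $S=\{c:\sum_{i} x_{i,c}=1\}$, so it suffices to guess $x$ with rational entries of polynomially bounded bit-size and verify conditions \eqref{eq:optimal-cond3a}--\eqref{eq:optimal-cond2} together with the objective. Once the committee is fixed, minimizing $\sum_{i} \bar x_i^2$ is a convex quadratic program and minimizing the voter loads in the leximax order is a lexicographic sequence of linear programs; in both cases an optimal distribution with polynomial bit-size exists, so the ``yes''-witness is succinct. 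The heart of the theorem is the hardness direction, and here the key is the following translation: identify each candidate with a vertex and each voter with an edge joining the two candidates it approves. Then $s(A)=2$ and $d(A)=3$ hold automatically precisely because the graph is cubic, a committee of size $k$ is a choice of $k$ vertices, and the load of a voter is the total load its (at most two) selected endpoints push onto the corresponding edge.

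For \textsc{\maxP} I would prove the structural dichotomy that a size-$k$ committee $S$ admits a load distribution with maximum voter load $\tfrac13$ if and only if $S$ is an independent set, while every non-independent committee forces maximum load at least $\tfrac25$. The ``if'' direction is immediate: an independent $S$ has $3k$ pairwise distinct incident edges, each fed by a single selected endpoint, so assigning $\tfrac13$ to each equalizes all covered loads at $\tfrac13$. For the gap, suppose $u,v\in S$ are joined by an internal edge $e$ and that every load is at most $t$; since each of $u$'s three edges carries at most $t$ while $u$ must shed total load $1$, it pushes at least $1-2t$ onto $e$, and symmetrically so does $v$, whence $\bar x_e\ge 2-4t$; combined with $\bar x_e\le t$ this yields $t\ge\tfrac25$. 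Choosing the target vector $y=(c,\dots,c)$ with $c$ strictly between $\tfrac13$ and $\tfrac25$ (say $c=\tfrac{11}{30}$) then makes the answer to \textsc{\maxP} ``yes'' exactly when some committee has maximum load below $c$, i.e.\ exactly when a size-$k$ independent set exists.

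For \textsc{\varP} the same graph gadget works verbatim; only the threshold changes. Writing $n_+$ for the number of voters incident to $S$, the Cauchy--Schwarz bound $\sum_{i} \bar x_i^2 \ge k^2/n_+$ holds with equality iff all covered loads equal $k/n_+$. An independent $S$ has $n_+=3k$ and attains the bound with value $k/3$; a committee with $m_{\mathrm{in}}\ge 1$ internal edges has $n_+=3k-m_{\mathrm{in}}<3k$, forcing $\sum_{i} \bar x_i^2 \ge k^2/(3k-m_{\mathrm{in}}) > k/3$. Setting the threshold $\alpha=k^2/(3k-1)$ separates the two cases, since $k/3 < \alpha \le k^2/(3k-m_{\mathrm{in}})$ whenever $m_{\mathrm{in}}\ge1$; thus \textsc{\varP} answers ``yes'' iff a size-$k$ independent set exists.

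I expect the main obstacle to be making the separation airtight rather than the construction itself. One must verify that the lower bounds above are genuinely unbeatable by ``global'' tricks---using a third selected vertex on an incident edge, or a deliberately unbalanced distribution---which is why both bounds are phrased so as to hold for an \emph{arbitrary} valid load distribution (the internal-edge inequality $\bar x_e\ge 2-4t$ and the Cauchy--Schwarz estimate invoke no independence assumption, and every valid $x$ reduces to its induced committee). The remaining care is to land each query strictly inside its gap, so that the leximax comparison and the strict inequality ``$<\alpha$'' decide the instance cleanly, and to confirm that the reduction respects the degree bounds and the standing assumption $N_c\ne\emptyset$, which it does since every vertex of a cubic graph has degree exactly three.
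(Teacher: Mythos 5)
Your reduction is the same as the paper's---\textsc{Independent Set} on cubic graphs, candidates as vertices, voters as edges---and your argument is correct, but the separation for \textsc{\maxP} rests on a different key lemma. The paper separates the two cases by averaging: a non-independent committee covers at most $3k-1$ voters, so the maximal load is at least $\tfrac{k}{3k-1}>\tfrac13+\tfrac1{9k}$, and it accordingly uses the $k$-dependent target vector with $3k$ entries $\tfrac13+\tfrac1{9k}$ followed by zeros. Your local argument at an internal edge ($x_{e,u}\ge 1-2t$ for each endpoint, hence $t\ge\tfrac25$) gives a constant gap $[\tfrac13,\tfrac25)$ independent of $k$ and lets you take the all-constant target $y=(\tfrac{11}{30},\dots,\tfrac{11}{30})$; that is a slightly cleaner dichotomy. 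For \textsc{\varP} your Cauchy--Schwarz bound $k^2/n_+$ is exactly the paper's argument, only with threshold $\tfrac{k^2}{3k-1}$ in place of $\tfrac k3+\tfrac19$ (both lie in the gap). Where you are lighter than the paper is NP-membership: the paper discharges it by encoding the problems as a MILP/MIQP and citing known membership results for those classes, whereas you assert directly that a polynomial-size rational witness exists. For \textsc{\maxP} this needs the (routine, but worth stating) observation that the condition $(\bar x_1,\dots,\bar x_n)\mathbin{\dot<}(y_1,\dots,y_n)$ is a finite disjunction of polyhedral conditions, so a vertex of one of these polyhedra serves as witness; for \textsc{\varP} the minimizer of the convex quadratic is generally not a vertex, so the polynomial bit-size should be justified via the KKT/active-set linear system---or by citing the MIQP membership result, as the paper does.
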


\begin{proof}
To show hardness for both problems, we reduce from the NP-complete problem \textsc{Independent Set} on cubic graphs~\citep{GJS76a,GaJo79a}, which is defined as follows:
given a cubic graph $(V,E)$ (i.e., a graph such that every vertex has degree 3) and a positive integer~$k$, is there a set of vertices $S\subseteq V$ with $|S|=k$ such that $|e\cap S|\leq 1$ for all edges $e\in E$?
Let $E=(e_1,\dots,e_n)$.
We construct an instance of \textsc{\maxP} and \textsc{\varP} by
identifying candidates with vertices ($C=V$) and voters with edges, i.e., $A=(e_1,\dots,e_n)$.
It is easy to see that $s(A)=2$ and $d(A)=3$.
Without loss of generality we assume that $n\geq 3k$ because cubic graphs
with fewer than $3k$ edges cannot have an independent set of size $k$.\footnote{To see this, consider a cubic graph
with an independent set of size $k$. All $k$ vertices in the independent set have three outgoing edges and these $3k$ edges must all be distinct, since vertices in an independent set must not be connected via an edge.}

To prove that \textsc{\maxP} is NP-hard, we claim that $(V,E)$ has an independent set of size~$k$ if and only if there exists a load distribution $x$ with $(\bar x_{1},\dots, \bar x_{n}) \mathbin{\dot<} (y_1,\dots,y_n)$, where $(y_1,\dots,y_n)$ is the sequence containing $3k$ entries of $\frac 1 3 + \frac{1}{9k} $ followed by zeros.
If $S$ is an independent set, then $S$, viewed as a committee, contains candidates that are approved by disjoint sets of (three) voters. Hence, there are exactly $3k$ voters that bear a load of $\frac{1}{3}$; all others have load $0$.
Conversely, let $S$ be a committee such that $(\bar x_{1},\dots, \bar x_{n})
\mathbin{\dot<} (y_1,\dots,y_n)$. Since candidates are approved by three
voters, if there exists a voter with more than one approved candidate in
$S$, then the average load (and thus the maximum load) is at least
$\frac{k}{3k-1}>\frac{1}{3}+\frac{1}{9k}$, which contradicts our
assumption that $(\bar x_{1},\dots, \bar x_{n}) \mathbin{\dot<}
(y_1,\dots,y_n)$. Hence $S$ is an independent set. 

To prove that \textsc{\varP} is NP-hard, we claim that $(V,E)$ has an independent set of size $k$ if and only if there exists a load distribution $x$ with $\sum_{i\in N} \bar x_{i}^{\,2}< \frac{k}{3} + \frac{1}{9}$. It is straightforward to see that an independent set $S$ corresponds to a committee with $\sum_{i\in N} \bar x_{i}^{\,2}= 3k\cdot \left(\frac{1}{3}\right)^2=\frac{k}{3}$.
For the other direction, let $S$ be a committee with $\sum_{i\in N} \bar
x_{i}^{\,2}< \frac{k}{3}+\frac{1}{9}$. Note that at most $3k$ voters have
approved candidates in the committee. Let $N'\subseteq N$ be such that it
contains all voters $i$ with $\bar x_i>0$. Hence $\sum_{i\in N} \bar
x_{i}^{\,2}=\sum_{i\in N'} \bar x_{i}^{\,2}$. The value of $\sum_{i\in N'}
\bar x_{i}^{\,2}$ is minimal only if all $\bar x_i$, $i\in N'$, are equal and we
then have
$\sum_{i\in N'} \bar x_{i}^{\,2} = |N'|\cdot (\frac{k}{|N'|})^2 =\frac{k^2}{|N'|}$ \text.
If $|N'|<3k$, we thus see that $\sum_{i\in N'} \bar x_{i}^{\,2} \geq \frac{k^2}{3k-1} > \frac{k}{3}+\frac{1}{9}$.
Hence $|N'|=3k$ and we can conclude that $S$ corresponds to an independent set.

It remains to be shown that \textsc{\maxP} and \textsc{\varP} are contained in NP.
This is not immediate as a witness for a Yes-Instance (i.e., a load distribution) may not have a polynomially-sized bit representation. In other words, the fractions in the load distribution may have very large numerators and denominators.
To resolve this issue, we encode \textsc{\maxP} as a mixed-integer linear program (see the discussion following this proof).
Solving a mixed-integer linear program (i.e., its corresponding decision problem) is known to be NP-complete~\citep{Schr86a}.\footnote{This result essentially shows that mixed-integer linear programs have solutions of polynomial size.}
For showing NP-membership of \textsc{\varP}, we proceed in a similar fashion: we encode it as a mixed-integer quadratic program (see Theorem~\ref{thm:varP-mqlp}). NP-membership  then follows from a result by~\citet{PDM17a}.
 \end{proof}

We now turn to algorithms for computing \phrag's methods. First, we show how the outcome of \maxP can be computed with the help of mixed-integer linear programs (MILPs).\footnote{
For a general discussion on lexicographic optimization in MILPs, we refer the reader to a paper by \citet{DBLP:conf/iccsa/OgryczakS06} and references therein.}
We start by formulating a MILP that solves the decision problem \textsc{\maxP}.
We are thus given a load vector $\mathbf{y} = (y_1, \ldots, y_n)$ and ask whether an improvement is possible.
Without loss of generality we assume that $y_1 \geq  \ldots \geq y_n$.
The general idea is to find an index $t$ where an improvement over $\mathbf{y} = (y_1, \ldots, y_n)$ is possible.
This requires a new load vector $\mathbf{x} = (\bar x_1, \ldots, \bar x_n)$
such that $\bar x_{(1)},\dots,\bar x_{(t-1)}$ remain equal to $y_1, \ldots, y_{t-1}$, respectively, and that $\bar x_{(t)},\dots,\bar x_{(n)}$ are each less than or equal to $y_t - \epsilon$ for some $\epsilon>0$.
We thus guess the index~$t$ and a mapping from indices $1,\dots,t-1$ to voters.

We use variables $x_{i,c}$ (for $i\in N$, \mbox{$c\in C$}), $e_{i,j}$ (for $i,j\in N$), $s_{i}$ (for $i\in N$), $t_{j}$ (for $j\in N$), and $\epsilon$.
Recall that $\bar x_i=\sum_{c\in C} x_{i,c}$.
For a given $n$-tuple $\mathbf{y}$, let $\milpmax(\mathbf{y})$ be the MILP that maximizes $\epsilon$ under the constraints \eqref{eq:optimal-cond3a}--\eqref{eq:optimal-cond2} and \eqref{ilp:pi1}--\eqref{ilp:sm5}.
{\allowdisplaybreaks
\begin{align}
&e_{i,j} \in \set{0,1} &\quad \text{ for all $i,j \in N$} \label{ilp:pi1}\\
&s_i \in \set{0,1} & \text{ for all $i \in N$} \label{ilp:sm1}\\
&t_j \in \set{0,1} & \text{ for all $j \in N$} \label{ilp:sm2}\\
&s_i+\sum_{j \in N} e_{i,j}  = 1 & \text{ for all $i \in N$} \label{ilp:pi3}\\
&t_j+\sum_{i \in N} e_{i,j}  \leq 1 & \text{ for all $j \in N$} \label{ilp:pi2}\\
&\sum_{j \in N} t_j = 1 &  \label{ilp:sm3}\\
&\bar x_i - k (1 - e_{i,j}) \le y_j & \text{ for all $i,j \in N$} \label{ilp:xy}\\
&\bar x_i - k (2 - s_i - t_j) \le y_j -\epsilon & \text{ for all $i,j \in N$} \label{ilp:sm5}
\end{align}}

This MILP can be understood as follows:
The variables $e_{i,j}$ encode a partial bijection $\pi$ from a subset of $N$ to a subset of $N$ (those indices where no improvement occurs); the variables $s_i$ encode the subset $S \subseteq N$ where $\pi$ is not defined (those indices where the loads are less than or equal to $y_t-\epsilon$);
and the variables $t_j$ encode $t\in N$, an index of an element in $\{y_j: j\notin\textit{range}(\pi)\}$ (the index~$t$ where an actual improvement occurs).
Constraint \eqref{ilp:pi3} encodes the relation between~$\pi$ and~$S$:
for every $i\in N$, either $s_i=1$ or $e_{i,j}=1$ for some $j\in N$.
In a similar fashion, constraint \eqref{ilp:pi2} encodes the relation between $\pi$ and $t$:
for every $i\in N$, $t_i=1$ only if $e_{i,j}=0$ for all $j\in N$.
Together with constraint~\eqref{ilp:sm3}, we enforce that there exists exactly one $j\in N$ such that $t_j=1$.
Hence at least one voter has a load strictly smaller than $y_t$ and $(\bar x_{1},\dots, \bar x_{n}) \mathbin{\dot<} (y_1,\dots,y_n)$.

The final two constraints ensure that indeed $(\bar x_{1},\dots, \bar x_{n}) \mathbin{\dot<} (y_1,\dots,y_n)$.
From constraint~\eqref{ilp:xy} it follows that $\bar x_i \leq y_j$ whenever $\pi(i)=j$.
This is because if $e_{i,j}=0$ (i.e., $\pi(i)\neq j$), constraint~\eqref{ilp:xy} reduces to $\bar x_i - k \le y_j$, which is trivially satisfied because every load distribution $x$ satisfies $\bar x_i \le k$ for all $i \in N$.
If $e_{i,j}=1$ (i.e., $\pi(i)= j$), however, constraint \eqref{ilp:xy} reads $\bar x_i \le y_j$.
Similarly, constraint~\eqref{ilp:sm5} enforces that $\bar x_i\leq y_t -\epsilon \leq \max_{j\in N\setminus \textit{range}(\pi)} y_j -\epsilon$ for $i\in S$.
As we maximize $\epsilon$, we look for a solution where $\bar x_i < \max_{j\in N\setminus \textit{range}(\pi)} y_j$.
We conclude that a feasible solution with objective function value $\epsilon>0$ encodes a load distribution $x$ with $(\bar x_{1},\dots, \bar x_{n})
\mathbin{\dot<}(y_1,\dots,y_n)$. 
Observe that $\milpmax(\mathbf{y})$ solves the \textsc{\maxP} decision problem:
given voter loads $\mathbf{y}$, $\milpmax(\mathbf{y})$ returns $\epsilon>0$ if and only if \textsc{\maxP} with input $\mathbf{y}$ is a Yes-instance.

\smallskip
We now present a MILP-based algorithm that computes the outcome of \maxP. Our algorithm solves a sequence of at most $2n$ instantiations of the MILP~$\milpmax$, using the optimal solutions of previously solved instances as constraints for subsequent calls.
We assume that $\milpmax$ returns the load distribution $x$ and the objective function value $\epsilon$.
For an overview of the procedure, see Algorithm~\ref{alg:maxP}.

{
\renewcommand{\baselinestretch}{1.2}
\begin{algorithm}
    \normalsize
	\DontPrintSemicolon
	$\mathbf{y}\gets(k,0,\dots,0)$\;
	\For{$\ell=1\dots n$}{
		$x,\epsilon\gets\milpmax(\mathbf{y})$ \;
		$\bar x\gets(\bar x_{1},\dots, \bar x_{n})$\tcp*{$\bar x_{(1)} ,\dots, \bxell$ optimal}				
		\If(\tcp*[f]{no improvement}){$\epsilon = 0$}{
			$x',\epsilon'\gets\milpmax({\bar x})$\;
			\If(\tcp*[f]{$\bar x$ optimal}){$\epsilon'=0$}{
				\Return \ $\{c \in C \midd \sum_{i \in N} x_{i,c} = 1\}$\;
			}
		}		
		{	
			$\mathbf{y}\gets (\bar x_{(1)},\dots,\bar x_{(\ell+1)},0,\dots,0)$\;
		}
	}
	\Return $\{c \in C \midd \sum_{i \in N} x_{i,c} = 1\}$\;

	\caption{Computing \maxP}	\label{alg:maxP}
\end{algorithm}
}

We start with $\mathbf{y}=(k,0,\dots,0)$, an $n$-tuple consisting of one $k$ and $n-1$ zeros.
We employ $\milpmax$ to find a strictly better solution.
The only entry of $\mathbf{y}$ that can be improved is $\mathbf{y}_{(1)}=k$ and hence the solution $x$ returned by $\milpmax$ minimizes the largest load; let $\bar{x}_{(1)}$ be the largest load and $\bar{x}_{(2)}$ the second-largest.
We repeat this procedure with $\mathbf{y}=(\bar{x}_{(1)},\bar{x}_{(2)},0,\dots,0)$.
We already know that $\bar{x}_{(1)}$ is optimal and cannot be further decreased (and 0 cannot be improved), hence the next $\milpmax$ instance minimizes the second-largest load.
We iterate this process and in step $\ell$ guarantee that the $\ell$-th largest load is optimal.
If at some point $\milpmax$ returns $\epsilon=0$, we verify whether the current solution is optimal:
if $\milpmax({\bar x})$ also returns $\epsilon=0$, the load distribution $x$ is indeed optimal and
the algorithm terminates.
In any case Algorithm~\ref{alg:maxP} returns $\{c \in C \midd \sum_{i \in N} x_{i,c} = 1\}$, the committee corresponding to the load distribution $x$.

\smallskip

We have therefore proven the following result.

\begin{theorem}
\maxP can be computed by solving at most $2n$ mixed-integer linear programs with $\mathcal{O}(nm+n^2)$ variables.
\end{theorem}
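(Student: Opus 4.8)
The plan is to prove two things: that Algorithm~\ref{alg:maxP} returns a leximax-optimal committee (correctness), and that it does so using at most $2n$ calls to $\milpmax$, each an instance with $\mathcal{O}(nm+n^2)$ variables (complexity). The variable count is the routine part — every instantiation of $\milpmax$ uses the $nm$ variables $x_{i,c}$, the $n^2$ variables $e_{i,j}$, the $2n$ variables $s_i,t_j$, and the single variable $\epsilon$ — so I would dispatch it quickly and concentrate on correctness. This rests on a single loop invariant together with the already-established fact that $\milpmax(\mathbf y)$ returns $\epsilon>0$ exactly when some load distribution is leximax-smaller than $\mathbf y$, and in that case returns such a distribution while maximizing the witness gap.

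Write $\bar x^{*}$ for the decreasingly-sorted voter-load vector of a leximax-optimal distribution. I would prove by induction on $\ell$ the invariant that, on entering iteration $\ell$, the target satisfies $\mathbf y = (\bar x^{*}_{(1)},\dots,\bar x^{*}_{(\ell-1)}, v, 0,\dots,0)$ for some value $v$, \ie its first $\ell-1$ coordinates already hold the optimal values. Granting this, the $\milpmax$ call of iteration~$\ell$ can beat $\mathbf y$ only by producing a distribution whose top $\ell-1$ loads match $\bar x^{*}_{(1)},\dots,\bar x^{*}_{(\ell-1)}$ (any smaller value in one of these positions would contradict leximax-minimality of $\bar x^{*}$) and whose $\ell$-th largest load is strictly below $v$; the trailing zeros cannot be lowered since loads are nonnegative. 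Hence maximizing $\epsilon$ is exactly minimizing the $\ell$-th largest load, whose minimum, under the constraint that the top $\ell-1$ are optimal, is precisely $\bar x^{*}_{(\ell)}$: this value is attained by $\bar x^{*}$ itself, and nothing smaller is attainable without contradicting optimality of $\bar x^{*}$. Thus after iteration~$\ell$ the current distribution has its top $\ell$ sorted loads equal to $\bar x^{*}_{(1)},\dots,\bar x^{*}_{(\ell)}$, and the update $\mathbf y \gets (\bar x_{(1)},\dots,\bar x_{(\ell+1)},0,\dots,0)$ re-establishes the invariant for iteration $\ell+1$.

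Correctness of the output then follows from the two exit points. If the loop runs to completion, the invariant gives all $n$ sorted loads equal to $\bar x^{*}$, so the returned $x$ is leximax-optimal. The early exit fires only when $\milpmax(\bar x)$ returns $\epsilon'=0$; by the correctness of $\milpmax$ this certifies directly that no load distribution is leximax-smaller than $\bar x$, which is exactly the definition of $\bar x$ being leximax-optimal. Here I would stress why the second call is needed: because $\mathbf y$ carries zeros in its tail while the realized distribution $\bar x$ may have nonzero tail loads, a value $\epsilon=0$ in the main call only says that the (possibly unachievable) zeroed target cannot be beaten, not that $\bar x$ is optimal; re-running $\milpmax$ on $\bar x$ itself closes this gap.

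For the call count, the loop executes at most $n$ iterations, each performing one main call to $\milpmax$ and, only when $\epsilon=0$, at most one verification call, giving at most $2n$ calls in total; combined with the variable count this yields the theorem. I expect the main obstacle to be making the invariant step fully rigorous — in particular, arguing cleanly that maximizing $\epsilon$ in $\milpmax(\mathbf y)$ isolates and minimizes exactly the $\ell$-th coordinate given the structure of constraints~\eqref{ilp:xy}--\eqref{ilp:sm5}, and handling the bookkeeping of the early-termination branch (which distribution is recorded when $\epsilon=0$, and why the two-stage test neither terminates prematurely nor fails to terminate).
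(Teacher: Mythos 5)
Your proposal is correct and follows essentially the same route as the paper: the paper proves this theorem by exhibiting Algorithm~\ref{alg:maxP}, arguing informally that iteration $\ell$ fixes the $\ell$-th largest load at its optimal value, that the $\epsilon=0$ branch requires the extra verification call $\milpmax(\bar x)$, and that this yields at most $2n$ MILP calls with the stated variable count. Your explicit loop invariant and your justification of the two-stage termination test are a more rigorous rendering of exactly that argument, and you correctly isolate the one step the paper also leaves informal (that maximizing $\epsilon$ under constraints~\eqref{ilp:xy}--\eqref{ilp:sm5}, given an optimal prefix, minimizes precisely the $\ell$-th largest load).
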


To compute \varP, we solve a mixed-integer quadratic program (MIQP), i.e., a program consisting of linear constraints and a quadratic optimization statement.

\begin{theorem}\label{thm:varP-mqlp}
\varP can be computed by solving one mixed-integer quadratic program with $\mathcal{O}(n m)$ variables.
\end{theorem}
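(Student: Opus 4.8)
The plan is to encode \varP directly as a single MIQP whose feasible region coincides exactly with the set of load distributions and whose objective is the convex quadratic $\sum_{i\in N}\bar x_i^{\,2}$. The continuous variables are the loads $x_{i,c}$ ($i\in N$, $c\in C$), and the only genuine obstacle is the disjunctive constraint~\eqref{eq:optimal-cond2}, which forces $\sum_{i\in N}x_{i,c}$ to equal either $0$ or $1$ rather than ranging freely over $[0,1]$.

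The key step is to linearize this disjunction by introducing one binary indicator variable $y_c\in\{0,1\}$ per candidate, interpreted as ``$c$ belongs to the committee,'' together with the linking constraint $\sum_{i\in N}x_{i,c}=y_c$. Because the loads are nonnegative, $y_c=0$ forces $x_{i,c}=0$ for all $i$, while $y_c=1$ forces the corresponding column to sum to exactly $1$; thus the linking constraint is equivalent to~\eqref{eq:optimal-cond2}. Summing it over $c$ and imposing $\sum_{c\in C}y_c=k$ reproduces the committee-size condition~\eqref{eq:optimal-cond1}. Retaining $x_{i,c}\ge 0$, the vanishing constraint $x_{i,c}=0$ for $c\notin A_i$ from~\eqref{eq:optimal-cond3}, and the objective $\min\sum_{i\in N}\bigl(\sum_{c\in C}x_{i,c}\bigr)^2$ then completes the program.

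What remains is to verify the correspondence in both directions. Given any load distribution, setting $y_c:=\sum_{i\in N}x_{i,c}$ yields a feasible point of the MIQP with the same objective value; conversely, any feasible pair $(x,y)$ satisfies conditions~\eqref{eq:optimal-cond3a}--\eqref{eq:optimal-cond2}, so its $x$-part is a load distribution. Hence the projection of the feasible set onto the $x$-variables is precisely the set of load distributions, and an optimal solution minimizes $\sum_{i\in N}\bar x_i^{\,2}$ over exactly this set, returning the committee $\{c\in C\midd\sum_{i\in N}x_{i,c}=1\}=\{c\in C\midd y_c=1\}$ selected by \varP. Counting variables, there are $|N|\cdot|C|=\mathcal{O}(nm)$ load variables and $|C|=m$ binary variables, giving $\mathcal{O}(nm)$ in total.

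I expect the only delicate point to be confirming that a single indicator variable per candidate genuinely captures~\eqref{eq:optimal-cond2} without auxiliary big-$M$ constraints; this works precisely because nonnegativity of the loads already couples $x_{i,c}=0$ to $y_c=0$, so no analogue of the elaborate permutation-encoding machinery used for \maxP (with its leximax tie-breaking) is needed here. Indeed, since \varP requires no lexicographic refinement, one optimization over the mixed-integer feasible set suffices, and the convexity of the quadratic objective is a welcome bonus rather than something the complexity claim depends on.
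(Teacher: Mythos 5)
Your proposal is correct and follows essentially the same route as the paper: the paper's MIQP simply imposes constraints (\ref{eq:optimal-cond3a})--(\ref{eq:optimal-cond2}) on the variables $x_{i,c}$ and minimizes $\sum_{i\in N}\bigl(\sum_{c\in C}x_{i,c}\bigr)^2$, and your binary indicators $y_c$ with the linking constraint $\sum_{i\in N}x_{i,c}=y_c$ are just the standard explicit realization of the integrality condition (\ref{eq:optimal-cond2}). The extra $m$ binaries do not affect the $\mathcal{O}(nm)$ variable count, so the statement holds as claimed.
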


\begin{proof}
Our MIQP uses the variables $x_{i,c}$ (for $i\in N$, $c\in C$) 
and the constraints \eqref{eq:optimal-cond3a}--\eqref{eq:optimal-cond2}. 
The quadratic optimization statement is 
\[\min \sum_{i\in N} \left(\sum_{c\in C} x_{i,c}\right)^2 \text.\]
Since minimizing  $\sum_{i\in N} \bar x_i^{\,2}$ minimizes the variance (see \secref{sec:direct}), this MIQP computes load distributions corresponding to \varP committees.
\end{proof}

Finally, we study the runtime for computing \seqP. A naive estimate is that \seqP can be computed in $\mathcal O(kmn)$ time. This estimate ignores the cost of computing the quantities $s_c^{(j)}$, \ie numerical operations are assumed to require constant time.
While this is a sensible assumption in many cases, here it is questionable since computing $s_c^{(j)}$ exactly  requires fractions with large numerators and denominators.
Indeed, the denominator of $s_c^{(j)}$ can grow exponentially with $j$.
Hence, the following theorem also takes the complexity of these operations into account.

\begin{theorem}\label{thm:seqP-runtime}
The output of \seqP can be computed in $\mathcal{O}(k^3mn(\log n)^2)$ time.
\end{theorem}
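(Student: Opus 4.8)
The plan is to take the naive $\mathcal{O}(kmn)$ counting of arithmetic steps and refine it by charging each step its true bit-complexity, since---as the paragraph preceding the theorem notes---the loads $\bar x_i^{(j)}$ and the quantities $s_c^{(j)}$ are exact fractions whose denominators may blow up with $j$. Concretely, I would (i) bound the bit-length of every rational that arises during the sequential construction of \seqP, (ii) recover the naive count of elementary rational operations, and (iii) multiply the two bounds.

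The key step---and the one I expect to be the main obstacle---is (i). I would maintain the invariant that, after round $j$, all voter loads are stored over a single common denominator $D_j$, and show by induction that $D_j$ increases by exactly a factor $|N_c|$ each time a candidate $c$ is added. Indeed, if $\bar x_i^{(j)} = p_i / D_j$ for integers $p_i$, then by \eqref{seq1} the new value $s_c^{(j+1)} = (D_j + \sum_{i \in N_c} p_i)/(D_j\,|N_c|)$ has denominator $D_{j+1} = D_j\,|N_c|$, and the loads left unchanged by \eqref{seq2} are simply rewritten over $D_{j+1}$. As $|N_c| \le n$ in every round, this gives $D_j \le n^j \le n^k$, so every denominator has $\mathcal{O}(k \log n)$ bits; and since the total load after round $j$ equals $j \le k$, each numerator $p_i \le k D_j$ also has $\mathcal{O}(k \log n)$ bits. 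This is precisely what converts the informally noted exponential growth of denominators into a clean bound that is linear in $k$ at the level of bit-length.

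For (ii) I would observe that in each of the $k$ rounds, evaluating \eqref{seq1} for every candidate costs $\sum_c |N_c| \le mn$ additions together with $\mathcal{O}(m)$ comparisons to locate the minimizer, for a total of $\mathcal{O}(kmn)$ elementary operations on the rationals of part (i), each operation being an addition, a division by $|N_c|$, or a comparison realized by cross-multiplication. For (iii), using schoolbook multiplication, any product or cross-multiplication of two $\mathcal{O}(k \log n)$-bit numbers costs $\mathcal{O}\!\left((k \log n)^2\right) = \mathcal{O}(k^2 (\log n)^2)$, which dominates the cost of a single addition. Combining, the running time is $\mathcal{O}(kmn) \cdot \mathcal{O}(k^2 (\log n)^2) = \mathcal{O}(k^3 mn (\log n)^2)$, as claimed. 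I would add that this accounting is deliberately loose---maintaining the common denominator makes each addition cost only $\mathcal{O}(k \log n)$, so a sharper analysis saves a factor of $k \log n$---but the stated bound already follows from the crude estimate.
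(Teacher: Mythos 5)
Your proposal is correct and follows essentially the same route as the paper's proof: bound every intermediate rational by numerator at most $kn^k$ and denominator at most $n^k$ (hence $\mathcal{O}(k\log n)$ bits), count $\mathcal{O}(kmn)$ elementary rational operations, and charge each $\mathcal{O}(k^2(\log n)^2)$ for schoolbook arithmetic. Your common-denominator invariant is a clean way to organize the bit-length bound that the paper obtains by bounding the reduced denominator by $|N_{c_1}|\cdots|N_{c_{j-1}}|\cdot|N_c|$, but the two arguments are substantively identical.
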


\begin{proof} 
In the following analysis we also consider the complexity of arithmetic operations in the algorithms, as exact numerical computation of the involved quantities may require numbers of substantial size.
Let us consider the procedure described in Section~\ref{sec:sequential}. In each of the $k$ rounds, one candidate is chosen. For this, the quantity $s_c^{(j)}$ is computed for every $c$ not yet placed in the committee.
To ensure correct results, we represent $s_c^{(j)}$ as fractions, i.e., pairs of integers.
Let $\{c_1,\dots,c_{j-1}\}$ be the first $j-1$ chosen candidates.
It is easy to see that the denominator of $s_c^{(j)}$ can be bounded by $|N_{c_1}|\cdot\ldots\cdot |N_{c_{j-1}}|\cdot |N_c|\leq n^j\leq n^k$, assuming we reduce fractions.
Furthermore, since $s_c^{(j)}\leq k$, the numerator of $s_c^{(j)}$ is at most $kn^k$.
Hence, the space required to store $s_c^{(j)}$ is bounded by $\mathcal{O}(k\log n)$.
The necessary computations for calculating $s_c^{(j)}$ (addition, division,
reducing fractions) can all be performed in $\mathcal O(b^2)$
time,\footnote{This quadratic bound is a very rough estimate and does not use
  any of the more sophisticated methods for multiplication such as the
  Sch\"onhage--Strassen algorithm~\citep{schonhage1971schnelle} or computing
  greatest common divisors~\citep{moller2008schonhage,brent2010modern}.}
where $b$ is the number of bits required to store any of $s_c^{(j-1)}$, and
$\mathcal{O}(n)$ such operations are required. 
Since $b= \mathcal{O}(k\log n)$, we conclude that $s_c^{(j)}$ can be computed in $\mathcal{O}(nk^2(\log n)^2)$ time.
This has to be done in each of the $k$ rounds for at most $|C|=m$ many candidates $c\in C$.
The consequent update of $\bar x_i^{(j)}$ does not increase the runtime bound further.
\end{proof}

\section{\phrag's Methods and Representation}
\label{sec:representation}

In this section, we study which representation axioms are satisfied by \phrag's methods. Our results are summarized in Table \ref{tbl:jr2}. Particularly noteworthy are the results that \seqP satisfies PJR and that \maxP and \varP satisfy PR.
For completeness, the table also contains results obtained by \citet{SEL17a} and \citet{camps2019method} regarding \eneP.

\subsection{Representation Axioms}

We start by stating the definitions of \citet{justifiedRepresentation} and \citet{SFF+17a}.

\begin{definition}\label{def:jr}
	A committee $S \subseteq C$ with $|S|=k$ provides
	\begin{itemize}
		\item \emph{justified representation (JR)} if there does not exist a set $N^* \subseteq N$ of voters with $|N^*| \ge \frac{n}{k}$, $|\bigcap_{i \in N^*} A_i| \ge 1$ and $|S \cap A_i| = 0$ for all $i \in N^*$.
		\item \emph{proportional justified representation (PJR)} if there does not exist an integer \mbox{$\ell>0$} and a set $N^* \subseteq N$ of voters with $|N^*| \ge \ell \frac{n}{k}$, $|\bigcap_{i \in N^*} A_i| \ge \ell$ and $|S \cap (\bigcup_{i \in N^*} A_i)| < \ell$.
		\item \emph{extended justified representation (EJR)} if there does not exist an integer $\ell>0$ and a set $N^* \subseteq N$ of voters with $|N^*| \ge \ell \frac{n}{k}$, $|\bigcap_{i \in N^*} A_i| \ge \ell$ and $|S \cap A_i| < \ell$ for all $i \in N^*$.
	\end{itemize} 
A rule $f$ \emph{satisfies} JR (respectively, PJR or EJR) if, for every instance $(A, k)$, every committee $S \in f(A, k)$ provides JR (respectively, PJR or EJR).  
\end{definition}

It follows immediately from the definitions that a rule satisfying EJR also satisfies PJR, and that a rule satisfying PJR also satisfies~JR.\footnote{\citet{justifiedRepresentation} have introduced an additional proportionality axiom known as \textit{core stability}. Since core stability is more demanding than EJR, the rules considered in this paper do not satisfy core stability.}

\newcommand{\pr}{\mathit{PR}}
\newcommand{\mr}{\mathit{MR}}

The following definition is due to \citet{SFF+17a}.

\begin{definition}\label{def:pr1}
	Consider an instance $(A, k)$ such that $k$ divides $n=|N|$. A committee $S = \{c_1, \ldots, c_k\} \subseteq C$ provides \emph{perfect representation} if there exists a partition of the set $N$ of voters into $k$ pairwise disjoint subsets $N_1, \ldots, N_k$ such that, 
	for all $j \in \{1,\ldots,k\}$,
	$|N_j| = \frac{n}{k}$ and $c_j \in  \bigcap_{i \in N_j} A_i$. 
	Let $\mathit{PR}(A, k)$ denote the set of all committees providing perfect representation for the instance $(A, k)$.
	A rule $f$ satisfies \emph{perfect representation (PR)} if, for every instance $(A, k)$ where $k$ divides $n$ and $\pr(A, k) \neq \emptyset$, we have $f(A,k) \subseteq \pr(A,k)$.
\end{definition}

The following example, which also appears in the papers by \citet{justifiedRepresentation} and \citet{SFF+17a}, illustrates the requirements of the different axioms.

\begin{example}\label{ex:4-2}
Let $C= \{a,b,c,d,e,f\}$ and consider the $8$-voter preference profile given by  
$A_1=\{a\}$, 
$A_2=\{b\}$, 
$A_3=\{c\}$, 
$A_4=\{d\}$, 
$A_5=\{a,e,f\}$, 
$A_6=\{b,e,f\}$, 
$A_7=\{c,e,f\}$,
$A_8=\{d,e,f\}$. 
Let $k=4$ and assume that ties are broken alphabetically. 
Then, \seqP chooses $e$, $f$, $a$, and $b$ (in this order). 
The final loads are $(\bar x_1, \ldots, \bar x_8) = (\frac{3}{4},\frac{3}{4},0,0,\frac{3}{4},\frac{3}{4},\frac{1}{2},\frac{1}{2})$.
This is indeed not optimal as there is a perfect load distribution $y$ with $\bar y_i = \frac{1}{2}$ for all $i \in N$. The corresponding committee $\{a,b,c,d\}$ is selected by both \maxP and \varP. 

Let $\ell=2$ and consider the voter group $N^* = \set{5, 6, 7, 8}$ of size $\ell
\frac{n}{k} = 2 \frac{8}{4}=4$.
Since the voters in $N^*$ all approve candidates $e$ and $f$, a set of size $\ell = 2$, the conditions for JR, PJR, and EJR all bind. JR requires that at least one candidate approved by at least one voter in~$N^*$ is chosen. PJR requires that at least $2$ candidates are chosen that are each supported by at least one voter from~$N^*$, while EJR requires that some voter from~$N^*$ is represented twice. Thus, EJR dictates that either $e$ or $f$ is chosen. 
On the other hand, the only committee providing PR is $\{a,b,c,d\}$.
As a consequence, no rule can satisfy both PR and EJR.\footnote{The incompatibility of PR and EJR was first observed by \citet{SFF+17a}.}
Note that \maxP and \varP both violate EJR in this example, and that \seqP violates PR.
\eneP also yields $\{e,f,a,b\}$, and thus violates PR.
\end{example}

\newcommand{\yes}{\textcolor{green!50!black}{\ding{52}}\xspace}
\newcommand{\no}{\textcolor{red!70!black}{\ding{55}}\xspace}

\begin{table}

\begin{center}
\scalebox{1}{
\begin{tabular}{lllll}
\toprule
		 & \quad\ JR             				& \quad\  PJR        							& \quad\ EJR         				& \quad\ PR     				\\
\midrule
\seqP	 & \yes (Corollary \ref{cor:seqP-JR})  & \yes (Theorem \ref{thm:seqP-PJR})	&  \no (Example \ref{ex:seqP-EJR}) & \no \;(Example \ref{ex:4-2})  \\

\maxP	 & \yes (Corollary \ref{cor:maxP-JR}) 	& \yes (Theorem \ref{thm:lexP-PJR})	   								&  \no (Example \ref{ex:4-2})		& \yes (Theorem \ref{thm:maxP-PR}) \\
\varP	 & \yes (Theorem \ref{thm:var-jr})			   				& \no \;(Example \ref{ex:varP-PJR}) &  \no (Example \ref{ex:4-2})        & \yes (Theorem \ref{thm:varP-PR}) \\
\eneP     & \yes~\citep{SEL17a,camps2019method}   & \yes~\citep{SEL17a,camps2019method}	&  \no~\citep{SEL17a,camps2019method} & \no \;(Example \ref{ex:4-2})  \\
\bottomrule
\end{tabular}
}
\end{center}
\caption{\phrag's methods and representation axioms}
\label{tbl:jr2}
\end{table}

\subsection{Results for \seqP}

\newcommand{\maxload}{max-load\xspace}

In this section we establish our main result: \seqP satisfies proportional justified representation. 
We use the following notation.
For the committee~$S$ that is selected by \seqP (using a fixed tie-breaking rule), we can relabel the candidates so that $S=\set{c_1 \ldots, c_k}$ and candidate $c_j$ was chosen in round $j$. Then, we have
$c_j = \arg\min_{c \in C \setminus \set{c_1, \ldots, c_{j-1}}} s_c^{(j)}$,
and the maximum load after round $j$ is $s^{(j)} = s_{c_j}^{(j)}$. 
The following lemma formalizes the intuitively obvious fact
that, when computing the optimal distribution of the load of a candidate~$c$
among its voters, it never helps to restrict attention to a subset 
${N' \subset N_c}$. 

\begin{lemma}\label{lem:mon+subset}
	Fix an instance $(A,k)$.
 For $j \le k$, a candidate $c \in C$ that has not been elected before round~$j$, and a nonempty subset $N' \subseteq N_c$, 
let, as a generalization of \eqref{seq1},
\begin{align}\label{seq1'}
s_c^{(j)}[N'] = \frac{1 + \sum_{i\in N'} \bar x_i^{(j-1)}}{|N'|}.
\end{align}
Then 
$s_c^{(j)}[N']$ is the maximum voter load after optimally distributing 
an additional load of $1$ among all voters in $N'$, on top of the loads $\bar
x_i^{(j-1)}$. 
In particular, 
$s_c^{(j)} = s_c^{(j)}[N_c] \le s_c^{(j)}[N']$ for all $N'\subseteq N_c$. 
\label{lem2}
\end{lemma}

\begin{proof} 
That $s_c^{(j)}[N']$ is the maximum voter load after optimally distributing 
an additional load 1 among $N'$ follows by \lemref{lem:seq} (or its proof)
by replacing $N_c$ by $N'$; the only non-obvious part is that
$s_c^{(j)}[N']\ge \bar x_i^{(j-1)}$ for all $i\in N'$.

Since the optimal distribution of the addional load among  $N'$ is a
possible distribution among the larger set $N_c$, it is obvious that
the optimal distribution among $N_c$ is at least as good, and thus
$s_c^{(j)}[N_c] \le s_c^{(j)}[N']$.
\end{proof}

We are now ready to prove our main theorem.

\begin{theorem}\label{thm:seqP-PJR}
	\seqP satisfies PJR.
\end{theorem}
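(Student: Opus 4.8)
The plan is to prove PJR by contradiction. Suppose the committee $S = \{c_1, \ldots, c_k\}$ selected by \seqP violates PJR. Then there exist an integer $\ell > 0$ and a set $N^* \subseteq N$ with $|N^*| \ge \ell \frac{n}{k}$, $|\bigcap_{i \in N^*} A_i| \ge \ell$, and $|S \cap (\bigcup_{i \in N^*} A_i)| < \ell$. The first idea is to track the total load borne by the voters in $N^*$ throughout the execution of the algorithm. Since $S$ contains fewer than $\ell$ candidates approved by anyone in $N^*$, the voters in $N^*$ can only receive load from these few ``shared'' candidates, so their \emph{total} accumulated load is small---I would aim to show it is strictly less than $\ell$. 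Consequently the \emph{average} load over $N^*$ is strictly less than $\frac{\ell}{|N^*|} \le \frac{\ell}{\ell n/k} = \frac{k}{n}$, so some voter $i^* \in N^*$ has final load $\bar x_{i^*} < \frac{k}{n}$.

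The crux is then to exhibit a candidate that \seqP should have chosen but didn't. The voters in $N^*$ share at least $\ell$ commonly approved candidates (from $|\bigcap_{i \in N^*} A_i| \ge \ell$), and fewer than $\ell$ candidates touching $N^*$ are in $S$, so there is at least one commonly approved candidate $c^* \notin S$ that every voter in $N^*$ approves. I would examine the round in which \seqP made its final relevant choice and compare $s_{c^*}^{(j)}$ against the load levels actually achieved. Using \eqref{seq1}, the load that $c^*$ would induce if its unit were spread over all of $N_c^* \supseteq N^*$ is controlled by the current loads of these voters; here \lemref{lem:mon+subset}\emph{(ii)} is essential, since it lets me bound $s_{c^*}^{(j)}[N_{c^*}] \le s_{c^*}^{(j)}[N^*]$, \ie restricting to the subgroup $N^*$ only \emph{increases} the relevant maximal load, so it suffices to reason about spreading the load over $N^*$ alone.

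The main obstacle---and the heart of the argument---is turning the ``low total load on $N^*$'' fact into a contradiction with greedy optimality. The plan is: because the voters in $N^*$ carry total load below $\ell$ while $|N^*| \ge \ell n / k$, spreading one more unit over $N^*$ yields a new common load at most $\frac{1 + (\text{total load on } N^*)}{|N^*|} < \frac{1 + \ell}{\ell n / k}$, which I would compare with $s^{(k)}$, the largest max-load value actually used by \seqP. By \lemref{lem:mon+subset}\emph{(i)} the \maxload sequence is monotonically increasing, so every $s^{(j)} \le s^{(k)}$, and I want to show $s_{c^*}^{(j)}[N^*]$ is strictly smaller than some $s^{(j)}$ that \seqP paid. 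The delicate point is quantifying ``how low'' the target load can be pushed and matching the arithmetic against $\frac{k}{n}$ and $\frac{1+\ell}{|N^*|}$; this is where the precise inequalities $|N^*| \ge \ell n/k$ and total load $< \ell$ must be combined just so.

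If this direct comparison does not close immediately, the fallback is a counting/averaging argument over the whole run: sum the max-load increments and use the fact that the total load placed equals $k$ while \seqP always minimizes the new maximum, then derive that at the final step some shared candidate $c^* \notin S$ had a strictly smaller $s_{c^*}$ value than the candidate actually chosen, contradicting the greedy rule. Either way, the structural inputs are exactly the two parts of \lemref{lem:mon+subset}: monotonicity of the \maxload sequence to propagate bounds across rounds, and the subset inequality to reduce everything to reasoning about $N^*$ in isolation.
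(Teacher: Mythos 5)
Your plan is essentially the paper's proof: bound the total load carried by $N^*$ by $\ell-1$, use \lemref{lem:mon+subset}\emph{(ii)} to get $s_{c^*}^{(k)} \le \frac{1+(\ell-1)}{|N^*|} \le \frac{k}{n}$ for an unelected common candidate $c^*$, propagate this via greediness and \lemref{lem:mon+subset}\emph{(i)} to bound every voter's final load, and close with the counting contradiction against $\sum_i \bar x_i^{(k)} = k$. Your ``fallback'' counting argument is exactly the step the paper uses (the direct strict comparison indeed does not close on its own), so the two arguments coincide; the only difference is that the paper runs the same inequalities with the relaxed threshold $|N^*| > \ell\frac{n}{k+1}$ to obtain a slightly stronger statement.
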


\begin{proof}
	PJR requires that $|S \cap (\bigcup_{i \in N^*} A_i)| \ge \ell$ for all groups $N^* \subseteq N$ of voters satisfying $|N^*| \ge \ell \frac{n}{k}$ and $|\bigcap_{i \in N^*} A_i| \ge \ell$ for some integer $\ell>0$. We show that \seqP satisfies a strictly stronger property by weakening the constraint $|N^*| \ge \ell \frac{n}{k}$ to $|N^*| > \ell \frac{n}{k+1}$. 
	
	Consider an instance $(A,k)$ and let $S$ be the committee selected by \seqP. Assume for contradiction that there exists a voter group $N^* \subseteq N$ and an integer $\ell>0$ with $|N^*| > \ell \frac{n}{k+1}$ such that $|\bigcap_{i \in N^*} A_i| \ge \ell$ and $|S \cap (\bigcup_{i \in N^*} A_i)| \le \ell-1$.
	
	Let $c \in (\bigcap_{i \in N^*} A_i) \setminus S$ and consider round $k$ (the last round) of the \seqP procedure. Adding candidate $c$ to the committee would have caused a maximum voter load of 

	\begin{align}\label{seq31}
		s_c^{(k)} &=   \frac{1 + \sum_{i\in N_c} \bar x_i^{(k-1)}}{|N_c|} 
				  \le \frac{1 + \sum_{i\in N^*} \bar x_i^{(k-1)}}{|N^*|} \nonumber \\
				  &\le \frac{1+(\ell-1)}{|N^*|} = \frac{\ell}{|N^*|} < \frac{k+1}{n}.
	\end{align}  
	Here, the first inequality follows from 
    \lemref{lem:mon+subset} (observe that $N^* \subseteq N_c$), the second
    inequality follows from $|S \cap (\bigcup_{i \in N^*} A_i)| \le \ell-1$,
    and the strict inequality follows from $|N^*| > \ell \frac{n}{k+1}$. 
	
	Let $c_k$ be the candidate that was chosen in round $k$. Since candidate
    $c$ was \emph{not} chosen, we have $c \neq c_k$ and $s_{c_k}^{(k)} \le
    s_c^{(k)}$. Using \lemref{lem:seq} and \eqref{seq31},
we have 
$s^{(k)} = s_{c_k}^{(k)} \le s_c^{(k)} <\frac{k+1}{n}$. 
In particular, this implies that at the end of round
    $k$, every voter $i \in N$ has a load $\bar x_i^{(k)}$ that is strictly
    less than $\frac{k+1}{n}$. Summing the loads over all voters, we get 
	\begin{align*}
		\sum_{i \in N} \bar x_i^{(k)} &= \sum_{i \in N^*} \bar x_i^{(k)} + \sum_{i \in N \setminus N^*} \bar x_i^{(k)}\\
		                         &\le (\ell - 1) + |N \setminus N^*| \cdot s^{(k)}\\
								 &< \ell - 1     + \frac{n}{k+1} (k+1-\ell) \frac{k+1}{n}
								 = k \text,
	\end{align*}
	where we have used the fact that $|N \setminus N^*| \le \frac{n}{k+1} (k+1-\ell)$.
	But $\sum_{i \in N} \bar x_i^{(k)} < k$ is a contradiction, because the sum of all voter loads (at the end of the \seqP procedure) must equal $k$. This completes the proof.
\end{proof}

\begin{remark}
We note that the proof of \thmref{thm:seqP-PJR} shows that \seqP satisfies a property that is strictly stronger than PJR, because the constraint on the size of the group~$N^*$ has been relaxed.\footnote{Replacing the constraint $|N^*| \ge \ell \frac{n}{k}$ with $|N^*| > \ell \frac{n}{k+1}$ is similar to replacing the Hare quota with the Droop quota in the context of single transferable vote elections (see \secref{sec:enestroem}). 
The condition $|N^*| > \ell \frac{n}{k+1}$ is the best possible here; see the paper by \citet{janson2018thresholds}.}
\end{remark}

\begin{remark}\label{rem:priceable}
In fact, in recent work \citet{peters2019proportionality} have shown that \seqP satisfies a stronger property that they call priceability. This in turn implies that \seqP satisfies Inclusion Proportionality for Solid Coalitions (IPSC)~\citep{aziz2021proportionally}, a property that lies between priceability and PJR.\footnote{We thank Jannik Peters for pointing out to us that the proof of \citet{peters2019proportionality} showing that priceability implies PJR can be easily adapted to show that priceability implies IPSC.}
\end{remark}

An immediate corollary of \thmref{thm:seqP-PJR} is that \seqP satisfies~JR.

\begin{corollary} \label{cor:seqP-JR}
	\seqP satisfies JR.
\end{corollary}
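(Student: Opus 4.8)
The plan is to derive JR as a direct specialization of Theorem~\ref{thm:seqP-PJR}. Recall that PJR forbids the existence of an integer $\ell > 0$ together with a voter group $N^*$ satisfying $|N^*| \ge \ell \frac{n}{k}$, $|\bigcap_{i \in N^*} A_i| \ge \ell$, and $|S \cap (\bigcup_{i \in N^*} A_i)| < \ell$. The key observation is that JR is precisely the instance $\ell = 1$ of this condition: setting $\ell = 1$ turns the size requirement into $|N^*| \ge \frac{n}{k}$, the cohesiveness requirement into $|\bigcap_{i \in N^*} A_i| \ge 1$, and the representation-failure requirement into $|S \cap (\bigcup_{i \in N^*} A_i)| < 1$, i.e.\ $|S \cap (\bigcup_{i \in N^*} A_i)| = 0$.

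The one point requiring a moment's care is reconciling the two slightly different formulations of the representation-failure clause. In Definition~\ref{def:jr}, JR is stated with the condition $|S \cap A_i| = 0$ \emph{for all} $i \in N^*$, whereas PJR (and hence its $\ell=1$ specialization) uses the single condition $|S \cap (\bigcup_{i \in N^*} A_i)| = 0$. First I would note that these are equivalent: $S \cap (\bigcup_{i \in N^*} A_i) = \bigcup_{i \in N^*} (S \cap A_i)$, so this union is empty if and only if $S \cap A_i = \emptyset$ for every $i \in N^*$. Thus a committee failing JR in the sense of Definition~\ref{def:jr} admits a group $N^*$ with $|N^*| \ge \frac{n}{k}$, $|\bigcap_{i \in N^*} A_i| \ge 1$, and $|S \cap (\bigcup_{i \in N^*} A_i)| = 0$, which is exactly a violation of PJR with $\ell = 1$.

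Putting this together, I would argue by contradiction: suppose some committee $S \in \seqP(A,k)$ fails JR. Then there is a group $N^*$ witnessing the JR violation, and by the equivalence above this same group, with $\ell = 1$, witnesses a PJR violation. But Theorem~\ref{thm:seqP-PJR} guarantees that every committee selected by \seqP satisfies PJR, a contradiction. Hence \seqP satisfies JR. I do not anticipate any real obstacle here; the entire content is the elementary set-theoretic identity relating the per-voter and union formulations, and the rest is an immediate quoting of the already-established theorem together with the definitional fact (noted in the excerpt) that PJR implies JR.
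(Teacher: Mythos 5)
Your proposal is correct and matches the paper's approach: the corollary is obtained directly from Theorem~\ref{thm:seqP-PJR} via the observation (already noted after Definition~\ref{def:jr}) that PJR implies JR, with your $\ell=1$ specialization and the union/per-voter equivalence being exactly the routine verification behind that implication.
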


However, \seqP violates EJR, as the following example demonstrates.

\begin{example}\label{ex:seqP-EJR}
Let $C = \{a, b, c_1,c_2, \ldots, c_{12}\}$, $k=12$, and 
consider the following profile with $n=24$ voters: 
	\begin{align*}
		&2 \times \{ a,b,c_1\} & & 6 \times \{ c_1, c_2, \ldots, c_{12}\} \\
		&2 \times \{ a,b,c_2\} & & 5 \times \{ c_2, c_3, \ldots, c_{12}\} \\
		&					   & & 9 \times \{ c_3, c_4, \ldots, c_{12}\}
	\end{align*}
	\seqP selects $S = \{ c_1, c_2, \ldots, c_{12}\}$. 
	(For details of the calculation, see Table~\ref{tbl:calc} in the appendix.)
	To see that~$S$ does not provide EJR, consider the group $N^*$ consisting of the four voters on the left. We have $|N^*| = 4 = 2 \frac{n}{k}$ and $|\bigcap_{i \in N^*} A_i| = |\{a,b\}| = 2$. Therefore, EJR requires that at least one voter in $N^*$ approves at least $2$ candidates in $S$, which is not the case. 
\end{example}

Note that \seqP also fails PR (see \exref{ex:4-2}). 
This is not surprising, considering that PR is computationally intractable~\citep{SFF+17a}. 

\subsection{Results for \maxP}
\label{sec:max-results}

In \exref{ex:4-2}, \maxP selects the committee providing perfect representation.
We now show that \maxP satisfies PR in general.

\begin{theorem}\label{thm:maxP-PR}
	\maxP satisfies PR. 
\end{theorem}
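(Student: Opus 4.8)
The plan is to show that whenever a profile admits a perfect representation (so $k \mid n$ and $\pr(A,k) \neq \emptyset$), every committee selected by \maxP must itself provide perfect representation. The natural strategy is to use the existence of a perfect committee to exhibit a specific benchmark load distribution, and then argue that leximax-optimality forces any \maxP-winning distribution down to the same benchmark, from which membership in $\pr(A,k)$ can be read off.

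First I would observe that if $S^* \in \pr(A,k)$ with witnessing partition $N_1,\dots,N_k$, then we can build a load distribution $y$ by letting candidate $c_j$ place its entire unit load on the block $N_j$, spread evenly, giving every voter load exactly $\frac{1}{n/k} = \frac{k}{n}$. Thus $y$ is a \emph{perfect} load distribution in the sense of \secref{sec:direct}: $\bar y_i = \frac{k}{n}$ for all $i$, and its voter-load vector is the constant vector $(\frac{k}{n},\dots,\frac{k}{n})$. Since the average load is always $\frac{k}{n}$ (a consequence of Condition~\eqref{eq:optimal-cond1}), the constant vector is the pointwise-minimal achievable vector in a strong sense, and in particular it is the unique leximax-minimal voter-load vector: any load distribution $x$ whose voter-load vector is not identically $\frac{k}{n}$ must have $\bar x_{(1)} > \frac{k}{n}$ (because the loads average to $\frac{k}{n}$, so if one entry is below the average another must be above it), and hence $y \mathbin{\dot<} x$ by comparing the largest entries. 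Therefore any leximax-optimal $x$, i.e.\ any distribution selected by \maxP, must also be perfect, with $\bar x_i = \frac{k}{n}$ for every $i$.

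The remaining and more delicate step is to convert ``the winning load distribution $x$ is perfect'' into ``the winning committee $S = \{c : \sum_i x_{i,c} = 1\}$ lies in $\pr(A,k)$,'' i.e.\ to recover an actual partition of the voters into blocks of size $\frac{n}{k}$, one per committee member, with each voter in a block approving the corresponding candidate. Here I expect the main obstacle: a perfect load distribution need \emph{not} split the voters so cleanly, since a single voter may receive fractional loads from several committee candidates, so the blocks are not simply read off from $x$. The plan is to argue combinatorially that such a partition nonetheless exists. Since every voter has total load exactly $\frac{k}{n}$ and every committee candidate carries total load exactly $1$, the bipartite fractional assignment between voters and the $k$ committee candidates (with voter-capacities $\frac{k}{n}$ and candidate-demands $1$, and edges only where $c \in A_i$) is a feasible fractional $b$-matching; one then invokes an integrality/rounding argument (e.g.\ via total unimodularity of the transportation polytope, or a Birkhoff--von~Neumann / Hall-type decomposition after scaling so that capacities are integral) to extract an integral assignment of each voter to exactly one approved committee member, yielding blocks of size exactly $\frac{n}{k}$. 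This integral assignment is precisely the witnessing partition, so $S \in \pr(A,k)$.

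To make the rounding step clean, I would scale by $\frac{n}{k}$ so voter capacities become $1$ and candidate demands become $\frac{n}{k}$, reducing the question to finding an integral point in the transportation polytope with the given integral margins; feasibility of the fractional point together with integrality of the margins guarantees an integral vertex. The upshot is that \maxP, by selecting only perfect load distributions whenever one exists, and via the transportation-polytope integrality argument, selects only committees in $\pr(A,k)$, which is exactly the requirement of PR in \defref{def:pr1}. I would close by noting that the leximax tie-breaking is what rules out spurious maximal-load-equal committees such as $\{b,c\}$ in \exref{ex:ex1}, and is therefore essential to the argument in the second paragraph.
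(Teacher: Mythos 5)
Your proposal is correct and follows essentially the same route as the paper: construct a perfect load distribution from a PR committee, conclude that every leximax-optimal distribution must therefore be perfect, and then round the resulting fractional voter-to-candidate assignment with integral margins to an integral one to recover the witnessing partition. The paper performs that final rounding by replicating each candidate's column $\frac{n}{k}$ times and applying the Birkhoff--von Neumann theorem to the resulting doubly stochastic matrix, whereas you invoke integrality of the (support-restricted) transportation polytope directly; these are interchangeable forms of the same argument.
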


\begin{proof} 
	Consider an instance $(A, k)$ and assume that $\pr(A, k) \neq \emptyset$ (otherwise, there is nothing to show). 
	Recall  that a load distribution $x = (x_{i,c})_{i \in N, c \in C}$ is perfect if $\bar x_i = \frac{k}{n}$ for all $i \in N$.
	We first show that there is a perfect load distribution. Let $\{c_1, \ldots, c_k\} \subseteq C$ be a committee providing perfect representation and let $N_1, \ldots, N_k$ be a corresponding partition of $N$. Define load distribution $x^*$ by
	\[ x^*_{i,c_j} = 	\begin{cases}
					\frac{k}{n} &\text{if $i \in N_j$,} \\
					0           &\text{otherwise.}
					\end{cases}
	\]
	It is straightforward to check that $x^*$ is a valid load distribution and that $x^*$ is perfect. 
	
	Clearly, a perfect load distribution is an optimal solution for the minimization problem in \maxP. It follows that \emph{every} optimal load distribution is perfect. 
	We now show that every perfect load distribution corresponds to a committee providing perfect representation. It then follows that every committee $S$ output by \maxP provides perfect representation for $(A, k)$.
 	
	Let $x = (x_{i,c})_{i \in N, c \in C}$ be a perfect load distribution and let $S$ be the corresponding committee, i.e., $S = \{c \in C: \sum_{i \in N} x_{i,c} = 1\}$.
    Define $M$ to be an $n \times n$ matrix with rows corresponding to voters and, for each $c \in S$, $\frac{n}{k}$ columns $c^1, c^2, \ldots c^{\frac{n}{k}}$ corresponding to candidate $c$. 
    For $i \in N$ and $c \in S$, define the entry of $M$ in row $i$ and column $c^j$ (for all $1 \le j \le \frac{n}{k}$) to be $x_{i,c}$.
	Every row of $M$ sums to $\sum_{c \in S} x_{i,c} \frac{n}{k} = \frac{n}{k} \bar x_i = 1$, and every column of $M$ sums to $\sum_{i \in N} x_{i,c} = 1$, so $M$ is doubly stochastic. We can now apply the Birkhoff--von Neumann
theorem and get that $M$ is a convex combination of permutation matrices. 
Choose a permutation matrix $P$ in this convex combination. 
$P$ encodes a bijection between the sets $N$ and $\bigcup_{c \in S}
\bigcup_{j=1}^{n/k} c^j$. From this bijection, we can extract a partition
$\set{N(c) \midd c \in S}$ of~$N$ by defining $N(c)$ as the set of voters that are
mapped to an element of the set $\{c^1, c^2, \ldots c^{\frac{n}{k}}\}$, for each
$c \in S$. It is easily verified that this partition satisfies the
conditions in \defref{def:pr1}. Therefore, $S$ provides perfect
representation for $(A, k)$. 
\end{proof}

Since EJR is incompatible with PR (see \exref{ex:4-2}), \maxP fails~EJR. However, it satisfies PJR.

\begin{theorem} \label{thm:lexP-PJR}
	\maxP satisfies PJR.
\end{theorem}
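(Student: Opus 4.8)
The plan is to show that \maxP satisfies PJR by a counting argument analogous to the proof of \thmref{thm:seqP-PJR}, but adapted to the global optimization setting. Suppose for contradiction that some committee $S$ output by \maxP fails PJR: there is an integer $\ell > 0$ and a group $N^* \subseteq N$ with $|N^*| \ge \ell \frac{n}{k}$, $|\bigcap_{i \in N^*} A_i| \ge \ell$, and $|S \cap (\bigcup_{i \in N^*} A_i)| \le \ell - 1$. The candidates in $\bigcap_{i \in N^*} A_i$ are commonly approved by all of $N^*$, and at most $\ell - 1$ of the members of $\bigcup_{i \in N^*} A_i$ lie in $S$. The key idea is that these voters are being ``under-served'': collectively $N^*$ deserves $\ell$ units of load spread among at least $|N^*| \ge \ell \frac{n}{k}$ voters, which would give an average load of at most $\frac{k}{n}$, yet $S$ only places at most $\ell - 1$ candidates on their combined approval sets.

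First I would fix an optimal (in particular leximax-optimal) load distribution $x$ corresponding to $S$, and consider the total load currently borne by the voters in $N^*$. Since $|S \cap (\bigcup_{i \in N^*} A_i)| \le \ell - 1$, and each candidate contributes exactly one unit of load distributed among its approvers, the voters in $N^*$ collectively bear at most $\ell - 1$ units of load under $x$ (only candidates in $S$ that some voter in $N^*$ approves can place load on $N^*$). The plan is then to exhibit a \emph{strictly better} load distribution in the leximax order, contradicting optimality. The natural move is to take $\ell$ commonly-approved candidates from $\bigcap_{i \in N^*} A_i$ — but these need not be in $S$, so instead I would argue at the level of achievable voter loads: since $N^*$ can support $\ell$ candidates with every voter in $N^*$ ending at load at most $\frac{\ell}{|N^*|} \le \frac{k}{n}$, the presence of some voter in $N^*$ with load exceeding this average reveals slack that leximax optimization must have removed.

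More carefully, the argument I would carry out is a pigeonhole/averaging step. The maximal voter load under any \maxP distribution is at most $\frac{k}{n}$ is \emph{not} generally true (it holds only when a perfect distribution exists), so instead I would bound the maximal load directly. Consider the candidates in $\bigcap_{i\in N^*}A_i$: picking $\ell$ of them and distributing their load evenly over $N^*$ yields a configuration where each voter in $N^*$ has load at most $\frac{\ell}{|N^*|} \le \frac{k}{n}$ from these candidates. Because $|S \cap (\bigcup_{i\in N^*}A_i)| \le \ell-1 < \ell$, at least one such commonly-approved candidate $c$ is absent from $S$, and the $\ell-1$ committee candidates touching $N^*$ leave at least one voter in $N^*$ effectively ``free.'' I would then show that swapping to include such a candidate, or redistributing, lowers the sorted load vector lexicographically: the voter currently bearing the maximum load (which must be strictly above the level $\frac{k}{n}$ forced by over-concentration) can be relieved, strictly decreasing some $\bar x_{(j)}$ while leaving the larger entries unchanged.

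The hard part will be making the contradiction with leximax-optimality fully rigorous, since unlike the PR case there may be no perfect distribution and the maximal load can legitimately exceed $\frac{k}{n}$. The delicate point is identifying \emph{which} coordinate of the sorted load vector strictly decreases and verifying that all strictly larger coordinates are preserved — a redistribution that helps $N^*$ must not inflate the loads of voters outside $N^*$ above the prevailing maximum. I would handle this by choosing the swap to move load only within candidates already feasible for $N^*$, ensuring the change is localized; the averaging inequality $\frac{\ell}{|N^*|} \le \frac{k}{n}$ then guarantees the new loads on $N^*$ are no larger than the quantity we are trying to beat. An alternative, possibly cleaner, route is to invoke the already-proven \thmref{thm:seqP-PJR}: establish that a \maxP-optimal distribution cannot be leximax-dominated by the distribution \seqP would produce, and transfer the PJR guarantee from the sequential rule — but this requires care, since \maxP and \seqP need not select the same committee, so I expect the direct averaging argument above to be the more reliable path.
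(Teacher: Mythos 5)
Your overall strategy---derive a contradiction with leximax-optimality by devoting $\ell$ commonly approved candidates of $N^*$ to the voters of $N^*$---is the same strategy as the paper's proof, but your write-up stops short of the three steps that actually make it work, and these are precisely the steps you flag as ``the hard part.'' First, you assert that the maximum load is strictly above $\frac{k}{n}$, ``forced by over-concentration,'' but never prove it. The paper's argument is a counting step on the \emph{complement} of $N^*$: the set $S'=\{c\in S: c\notin\bigcup_{i\in N^*}A_i\}$ has $|S'|\ge k-\ell+1$, and its entire load falls on the at most $(k-\ell)\frac{n}{k}$ voters of $N\setminus N^*$, so some voter $i'\in N\setminus N^*$ has $\bar x_{i'}>\frac{k}{n}$. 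Your averaging observation about $N^*$ itself ($\ell$ units spread over at least $\ell\frac{n}{k}$ voters) does not yield this. Second, one also needs that in the optimal distribution every voter of $N^*$ has load strictly below $\frac{k}{n}$; the paper proves this by a separate preliminary swap argument, and it is what guarantees that the top of the sorted load vector is occupied only by voters outside $N^*$, so that the later comparison of sorted vectors is clean.

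Third, and most importantly, the swap itself is underspecified. Adding a single missing candidate from $\bigcap_{i\in N^*}A_i$ and ``redistributing'' does not obviously work: a voter of $N^*$ already carrying load close to $\frac{k}{n}$ would, after receiving an extra $\frac{1}{|N^*|}$, not obviously stay below the maximum load you are trying to reduce (the guaranteed lower bound $\frac{k}{n}\bigl(1+\frac{1}{k-\ell}\bigr)$ on that maximum does not dominate $\frac{k}{n}\bigl(1+\frac{1}{\ell}\bigr)$ when $\ell\le k-\ell$). The paper instead replaces \emph{all} of $S\cap\bigcup_{i\in N^*}A_i$ (at most $\ell-1$ candidates) \emph{together with} one candidate of $S'$ chosen to carry positive load of $i'$, by $\ell$ candidates from $\bigcap_{i\in N^*}A_i$ whose load is spread evenly over $N^*$ while all other loads are kept unchanged; only then is each $N^*$-load capped at $\frac{\ell}{|N^*|}\le\frac{k}{n}<\bar x_{i'}$, every load outside $N^*$ weakly decreases, and $\bar x_{i'}$ strictly decreases, which is what closes the leximax argument. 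Your alternative route via \thmref{thm:seqP-PJR} does not go through, for the reason you yourself give. So the proposal points in the right direction, but as written it is an outline with the decisive constructions and inequalities missing.
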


\begin{proof} 
We introduce one new piece of notation for this proof. For a committee $S\subseteq C$, let~$x^S$ be a leximax-optimal load distribution, given that $S$ is selected. As usual, we let $\bar x_i^S = \sum_{c \in S} x_{i,c}^S$.
	
Consider an instance $(A,k)$ and a committee $S$ output by \maxP. Assume that $S$ does not satisfy PJR. That is, there exists $\ell>0$ and a group $N^* \subseteq N$ of voters with $|N^*| \ge \ell n/k$, $|\bigcap_{i \in N^*}A_i| \ge \ell$ and $|S \cap (\bigcup_{i \in N^*} A_i)| \le \ell-1$. Note that there must exist a candidate $c^* \in \cap_{i \in N^*}A_i \setminus S$.
		
The average load among the voters in $N^*$ is
		\begin{equation}\label{maxP-pjr}
			\frac{1}{|N^*|}\sum_{i \in N^*} \bar x_i^S 
			\le \frac{|S \cap (\bigcup_{i \in N^*} A_i)|}{|N^*|} 
			\le \frac{\ell-1}{|N^*|}
			\le \frac{k}{n}-\frac{1}{|N^*|}.
		\end{equation}
Further, since the average load among voters in $N^*$ is strictly less than $\frac{k}{n}$ and the total load among all $n$ voters is $k$, the average load among voters in $N \setminus N^*$ is strictly greater than $\frac{k}{n}$. In particular, consider a leximax-optimal load distribution $x^S$ and let~$i'$ be a voter with maximum load among all voters in $N \backslash N^*$ according to $x^S$. It must be the case that this voter has load $\bar x_{i'}^S > \frac{k}{n}$. 
	
We can now complete the proof by constructing a committee which has a leximax-smaller vector of voter loads than $S$, contradicting the optimality of $S$. Consider a candidate $c$ with $x_{i',c}^S>0$. Such a candidate must exist because $\bar x_{i'}^S >0$. 
Consider replacing $c$ by $c^*$ to form committee $S'=S \cup \{ c^* \} \setminus \{ c \}$. We construct a valid load distribution $y$ for committee $S'$ as follows. Distribute the load of $c^*$ among voters in $N^*$ only in such a way that for each $i \in N^*$, $y_{i,c} \le \max (\frac{k}{n}-\bar x_i^S,0)$. This is possible because $\sum_{j \in N^*} \max (\frac{k}{n}-\bar x_j^S,0) \ge \sum_{j \in N^*} (\frac{k}{n}-\bar x_j^S) \ge 1$, where the last inequality follows from \eqref{maxP-pjr}. Setting $y_{i,c'}=x^S_{i,c'}$ for every voter $i$ and every candidate $c' \in S' \cap S$ yields
\begin{align*}
	&\bar y_i \le \bar x_i^S + \max \left(\frac{k}{n}-\bar x_i^S,0 \right) = \max \left(\frac{k}{n}, \bar x_i^S \right)  \text{ for all } i \in N^*\\
	&\bar y_{i'} < \bar x^S_{i'} \,\text{, and} \\
	&\bar y_i \le \bar x_i^S \, \text{ for all } i \in N \setminus N^*.
\end{align*}
In particular, since $\bar x^S_{i'}>\frac{k}{n}$ and $\bar y_i \le \frac{k}{n}$ for all $i$ with $\bar y_i > \bar x^S_i$, $y$ is a leximax-smaller vector of loads than $x^S$, contradicting optimality of $S$.
\end{proof}

\begin{remark}\label{rem:priceable2}
As is the case for \seqP, \maxP also satisfies priceability~\citep{peters2019proportionality} and therefore IPSC (see Remark \ref{rem:priceable}).
\end{remark}

\begin{corollary} \label{cor:maxP-JR}
	\maxP satisfies JR.
\end{corollary}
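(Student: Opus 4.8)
The plan is to derive Corollary~\ref{cor:maxP-JR} as an immediate consequence of the two results already established for \maxP, rather than proving it from scratch. The key observation is the implication chain noted right after \defref{def:jr}: any rule satisfying PJR also satisfies JR. Since \thmref{thm:lexP-PJR} establishes that \maxP satisfies PJR, the corollary follows directly. Thus the entire proof is a one-line appeal to \thmref{thm:lexP-PJR} together with the fact that PJR implies JR.

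If I instead wanted to make the argument self-contained at the axiom level, I would verify the implication ``PJR $\Rightarrow$ JR'' directly from \defref{def:jr}. The point is that JR is exactly the special case $\ell = 1$ of PJR. Concretely, suppose a committee $S$ fails JR: there is a set $N^* \subseteq N$ with $|N^*| \ge \frac{n}{k}$, $|\bigcap_{i \in N^*} A_i| \ge 1$, and $|S \cap A_i| = 0$ for all $i \in N^*$. Taking $\ell = 1$, the first two conditions match the PJR hypotheses $|N^*| \ge \ell\frac{n}{k}$ and $|\bigcap_{i \in N^*} A_i| \ge \ell$. Moreover, since $A_i \subseteq \bigcup_{i \in N^*} A_i$ and no voter in $N^*$ approves any member of $S$, we have $S \cap (\bigcup_{i \in N^*} A_i) = \emptyset$, so $|S \cap (\bigcup_{i \in N^*} A_i)| = 0 < 1 = \ell$. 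Hence $S$ also fails PJR. Contrapositively, PJR implies JR.

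Since this implication holds for any committee, and \maxP selects only committees providing PJR (by \thmref{thm:lexP-PJR}), every committee output by \maxP provides JR. There is really no obstacle here: the corollary is purely a logical consequence of the ordering of the representation axioms, and the only content is recognizing that JR is the $\ell=1$ instance of PJR. Accordingly, I would write the proof as a single sentence invoking \thmref{thm:lexP-PJR} and the remark that PJR implies JR.

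\begin{proof}
	This follows immediately from \thmref{thm:lexP-PJR}, since every rule satisfying PJR also satisfies JR.
\end{proof}
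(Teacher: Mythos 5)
Your proof is correct and matches the paper's approach exactly: the corollary is stated as an immediate consequence of \thmref{thm:lexP-PJR} together with the implication PJR~$\Rightarrow$~JR noted after \defref{def:jr}. Your explicit verification that JR is the $\ell=1$ case of PJR is accurate but not needed beyond the one-line citation.
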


We note that \exref{ex:ex1} shows that simply minimizing the maximum voter load (without leximax tie-breaking) does not even yield committees satisfying JR.

\subsection{Results for \varP}

The proof of \thmref{thm:maxP-PR} directly applies to \varP.

\begin{theorem}\label{thm:varP-PR}
	\varP satisfies PR. 
\end{theorem}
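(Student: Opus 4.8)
The plan is to observe that the proof of \thmref{thm:maxP-PR} establishes two facts that together yield PR, and that both facts are independent of which particular balancedness objective is being optimized. The key structural insight in that proof is the following: whenever $\pr(A,k) \neq \emptyset$, there exists a \emph{perfect} load distribution $x^*$ (with $\bar x^*_i = \frac{k}{n}$ for all $i \in N$), and \emph{every} committee providing perfect representation can be realized by such a perfect distribution. The argument constructing $x^*$ from a perfect-representation committee and its partition $N_1,\dots,N_k$ uses only Conditions \eqref{eq:optimal-cond3a}--\eqref{eq:optimal-cond2}, so it carries over verbatim.

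The first step is to note that a perfect load distribution is optimal for \varP as well. This is immediate from the variance computation in \secref{sec:direct}: the average load is fixed at $\frac{k}{n}$ by Condition~\eqref{eq:optimal-cond1}, so the variance $\frac{1}{n}\sum_{i\in N}(\bar x_i - \frac{k}{n})^2$ is nonnegative and equals zero precisely when $\bar x_i = \frac{k}{n}$ for all $i$. Hence a perfect distribution achieves the global minimum of the \varP objective, and conversely every optimal distribution must be perfect (since any nonzero variance is strictly worse). This is exactly the analogue of the step ``a perfect load distribution is an optimal solution for the minimization problem in \maxP, and every optimal load distribution is perfect.''

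The second step is purely about the correspondence between perfect load distributions and perfect-representation committees, and this part of the proof of \thmref{thm:maxP-PR} makes no reference to the objective at all: starting from any perfect load distribution $x$, one forms the matrix $M$ with entries $m_{i,c} = x_{i,c}\frac{n}{k}$, expands it to a doubly stochastic $n \times n$ matrix $M'$, applies the Birkhoff--von Neumann theorem to extract a permutation matrix $P$, and reads off a partition satisfying \defref{def:pr1}. I would therefore simply invoke this portion of the earlier argument unchanged.

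Putting these together: if $\pr(A,k)\neq\emptyset$ then a perfect distribution exists, every \varP-optimal distribution is perfect, and every perfect distribution corresponds to a perfect-representation committee; hence every committee in $\varP(A,k)$ lies in $\pr(A,k)$, which is PR. There is essentially no obstacle here — the only thing to verify is that the optimality-forces-perfection step goes through for the variance objective, which it does trivially because variance is minimized exactly at the perfect distribution. The proof thus reduces to citing \thmref{thm:maxP-PR} and substituting the one-line variance optimality argument for the leximax optimality argument.
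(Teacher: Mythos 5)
Your proposal is correct and matches the paper's approach exactly: the paper states that the proof of \thmref{thm:maxP-PR} directly applies to \varP, and your argument spells out precisely why --- the only objective-dependent step is that optimality forces a perfect load distribution, which holds trivially for the variance objective, while the Birkhoff--von Neumann correspondence between perfect distributions and perfect-representation committees is objective-independent.
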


Unlike \maxP, \varP fails PJR.

\begin{example} \label{ex:varP-PJR}
Let $C = \set{a,b,c,d,e,f,g}$, $k=6$, and consider the following profile with 100 voters:  
67 voters approve $\set{a,b,c,d}$,
12 voters approve $\set{e}$,
11 voters approve $\set{f}$, and
10 voters approve $\set{g}$.
Let $N^*$ be the set of voters approving $\set{a,b,c,d}$. We have $|N^*| = 67 \ge 4 \frac{n}{k}$ and $|\bigcap_{i \in N^*} A_i| = 4$. Thus, PJR requires that all four candidates in $\bigcap_{i \in N^*} A_i = \set{a,b,c,d}$  are selected. However, \varP selects $\set{a,b,c,e,f,g}$.
\end{example}

The previous example also shows that the sequential version of \varP violates~PJR.
Finally, we show that \varP satisfies JR.

	\begin{theorem}
		\varP satisfies JR.\label{thm:var-jr}
	\end{theorem}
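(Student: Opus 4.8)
The plan is to argue by contradiction, assuming some committee $S$ output by \varP violates JR, and then to exhibit a load distribution with strictly smaller $\sum_{i\in N}\bar x_i^{\,2}$ than the one corresponding to $S$, contradicting optimality. So suppose there is a voter group $N^*$ with $|N^*|\ge \frac{n}{k}$ and a common approved candidate $c\in\bigcap_{i\in N^*}A_i$ such that $|S\cap A_i|=0$ for all $i\in N^*$; in particular $c\notin S$, and no voter in $N^*$ is charged any load by $S$, so $\bar x_i=0$ for every $i\in N^*$.

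First I would observe that, since the total load is $k$ and the $n-|N^*|\le n-\frac{n}{k}$ voters outside $N^*$ carry all of it, the average load among voters outside $N^*$ is large, so at least one voter $i'\in N\setminus N^*$ has $\bar x_{i'}$ bounded below by something exceeding $\frac{1}{|N^*|}$. The natural modification is a local swap: take a candidate $c'\in S$ that contributes positive load $x_{i',c'}>0$ to the overloaded voter $i'$, remove $c'$ from the committee and add $c$ instead, then redistribute. Concretely, I would move a small amount $\delta$ of load off $i'$ (by dropping $c'$, or shifting part of $c'$'s load) and place a unit of load for the new candidate $c$ spread evenly across $N^*$, giving each $i\in N^*$ a load of roughly $\frac{1}{|N^*|}\le\frac{k}{n}$. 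Because $c$ is approved by all of $N^*$ this redistribution is feasible, and because $N^*$ started from zero load it remains lightly loaded.

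The key quantitative step is the convexity computation: transferring load from a heavily loaded voter to one or more lightly loaded voters strictly decreases $\sum_i \bar x_i^{\,2}$. If voter $i'$ has load $L$ and we move an amount to voters whose loads are all strictly below $L-\text{(amount moved)}$, then the sum of squares drops. Here the voters in $N^*$ receive load starting from $0$, so as long as $i'$ is strictly more loaded than the post-transfer loads in $N^*$, the swap is strictly improving. Thus I would verify the inequality $|N^*|\ge\frac{n}{k}$ forces $\bar x_{i'}$ to be large enough that the gap is positive, making the sum-of-squares strictly smaller and contradicting that $S$ minimizes $\sum_i\bar x_i^{\,2}$.

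The main obstacle will be handling the redistribution bookkeeping cleanly: when we drop $c'$ to make room for $c$, we must ensure the resulting array is still a valid load distribution satisfying \eqref{eq:optimal-cond3a}--\eqref{eq:optimal-cond2}, in particular that exactly $k$ candidates carry a full unit of load and that no individual load exceeds $1$. One has to be careful that removing $c'$ does not leave $c'$ partially loaded (it must be fully removed, so its entire unit must be reabsorbed), and that the voters of $N^*$ can absorb a full unit from $c$ while staying below their previous maximal loads. I expect the cleanest route is to choose $c'$ as a candidate whose load is borne partly by $i'$, redistribute $c'$'s load among its own approvers as well if needed, and then show the net effect on $\sum_i\bar x_i^{\,2}$ is a strict decrease via the standard fact that equalizing or flattening a load profile (moving mass from high to low) reduces the sum of squares. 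Verifying feasibility of this exchange and pinning down the strict inequality is the crux; the convexity argument itself is routine once the swap is set up.
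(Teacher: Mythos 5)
Your overall strategy---derive a contradiction by swapping one committee member for the commonly approved candidate $c$ and comparing sums of squares---is exactly the paper's, and your bookkeeping observations (voters in $N^*$ carry zero load, the average load outside $N^*$ exceeds $\frac{k}{n}$, and $c$'s unit can be spread over $N^*$ at a cost of at most $\frac{1}{|N^*|}\le\frac{k}{n}$) are all correct. But there is a genuine gap in the step you call routine. The operation is not a transfer of a small amount $\delta$ from the overloaded voter $i'$: to keep the committee at size $k$ you must delete the candidate $c'$ entirely, which removes a \emph{full unit} of load distributed over all of $N_{c'}$, not just the piece $x_{i',c'}$ sitting on $i'$. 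The resulting decrease in the objective is $\sum_{i\in N_{c'}}\bigl(2\bar x_i x_{i,c'}-x_{i,c'}^2\bigr)$, and knowing only that $x_{i',c'}>0$ and $\bar x_{i'}>\frac{k}{n}$ gives no useful lower bound on it: if $x_{i',c'}$ is tiny and the bulk of $c'$'s load sits on many lightly loaded voters, this quantity can be as small as roughly $\frac{1}{|N_{c'}|}$, which does not beat the increase of up to $\frac{k}{n}$ incurred by adding $c$. So ``moving mass from high to low lowers the sum of squares'' does not close the argument as stated.

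The missing idea is to exploit optimality of the load distribution \emph{within the fixed committee} $S$: the paper takes $i^*$ to be a voter of globally maximal load, picks $c'$ with $x_{i^*,c'}>0$, and observes that every voter in $N_{c'}$ must then carry the same (maximal) load $\bar x_{i^*}$, since otherwise shifting a bit of $c'$'s load from $i^*$ to a less-loaded approver would already reduce the variance. With that in hand the decrease equals $2\bar x_{i^*}-\sum_{i\in N_{c'}}x_{i,c'}^2\ge\bar x_{i^*}\ge\frac{k^2}{n(k-1)}>\frac{k}{n}$, which strictly beats the $\le\frac{k}{n}$ increase. By contrast, the feasibility issues you flag as the crux are the easy part: nothing is ``reabsorbed'' when $c'$ is deleted---its unit is simply replaced by $c$'s unit on $N^*$, and all constraints are trivially maintained. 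You have the right skeleton, but without the equal-loads observation (or some substitute, e.g.\ choosing $c'$ to maximize $\sum_i\bar x_i x_{i,c'}$, which is at least $\frac{1}{k}\sum_i\bar x_i^2>\frac{k}{n}$) the strict inequality does not follow.
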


The proof of \thmref{thm:var-jr} can be found in the appendix.

\section{Relationship to Apportionment Methods}
\label{sec:apportionment}

As mentioned in \secref{sec:related}, the well-studied apportionment problem~\citep{BaYo82a} constitutes a special case of approval-based committee elections. To see this, define a \textit{party-list profile} as a preference profile $A=(A_1, \dots, A_n)$ for which the set $C$ of candidates can be partitioned into ``parties'' $C = P_1 \mathbin{\dot{\cup}} P_2 \mathbin{\dot{\cup}} \ldots \mathbin{\dot{\cup}} P_p$ in such a way that 
each party $P_j$ contains at least $k$ candidates and 
each voter approves precisely the candidates of one party (i.e., for all $i \in N$, there exists a~$j\in\{1,\dots,p\}$ such that $A_i=P_j$). Each party-list profile $A$ can be summarized by a \textit{vote vector} $V_A=(v_1,  \ldots, v_p)$, where $v_j= |\{i \in N \midd A_i=P_j\}|$ is the total number of votes for party~$P_j$. 
An \textit{apportionment method} is a function that maps a vote vector $V=(v_1, \ldots, v_p)$ and a natural number $k$ to a \emph{seat distribution} \mbox{$z=(z_1, \ldots, z_p) \in \mathbb{N}_0^p$} with \mbox{$\sum_{j=1}^p z_j = k$}. 
Since vote vectors correspond to party-list profiles, approval-based committee voting rules are generalizations of apportionment methods. 
As a consequence, every approval-based committee voting rule~$\mathcal{R}$ induces an apportionment method~$M_{\mathcal{R}}$~\citep{BLS18a}: 
The number $z_j$ of seats that $M_{\mathcal{R}}$ allocates to a party $P_j$ is given by the number $|S \cap P_j|$ of candidates from party $P_j$ that are members of the committee $S$ selected by the rule~$\mathcal{R}$.

Apportionment methods have been extensively studied by  \mbox{\citet{BaYo82a}} and \citet{Puke14a}. Three of the most widely-used apportionment methods are 
\begin{itemize}
\item the \textit{D'Hondt method} (aka \textit{Jefferson method} or \textit{greatest divisors method}),
\item the \textit{Sainte-Laguë method} (aka \textit{Webster method} or \textit{major fractions method}), 
and 
\item the \textit{largest remainder method} (aka \textit{Hamilton method} or \textit{Hare--Niemeyer method}). 
\end{itemize}
Interestingly, all three apportionment methods are induced by different variants of \phrag's methods: 
\seqP and \maxP both induce the D'Hondt method~\citep{Phra95a,Jans16a,BLS18a}, 
\varP induces the Sainte-Laguë method~\citep{BLS18a},
and \eneP (using the Hare quota $q_H$) induces the largest remainder method~\citep{camps2019method}.%
\footnote{Under the assumption that there are at least as many seats as there are parties (i.e., $k\ge p$), the optimization variant that maximizes the minimum voter load (see Remark \ref{rem:minimaxP}) induces 
the \textit{Adams method}. This was remarked by \citet{Jans16a} and also follows from Proposition 3.11 of \citet{BaYo82a}.}

Some of the representation axioms discussed in \secref{sec:representation} have analogies in the apportionment literature: When restricted to party-list profiles, both EJR and PJR (see \defref{def:jr}) coincide with the requirement that the seat distribution satisfies \textit{lower quota} (i.e., $z_j \ge \lfloor k \frac{v_j}{n} \rfloor$ for all $j$).
Therefore, an apportionment method $M_\mathcal{R}$ induced by an approval-based committee voting $\mathcal{R}$ satisfies lower quota whenever $\mathcal{R}$ satisfies PJR. This observation, which was first made by \citet{BLS18a}, gives rise to an alternative proof for the fact that \varP fails PJR: \varP induces the Sainte-Laguë method~\citep{BLS18a}, which is well-known to fail lower quota \citep[p.~130]{BaYo82a}.\footnote{Indeed, the profile in \exref{ex:varP-PJR} is a party-list profile with vote vector $(67,12,11,10)$, for which the Sainte-Laguë method fails lower quota for $k=6$.}

Two further properties that are often studied in the apportionment setting are house monotonicity and population monotonicity \citep[p.~117]{BaYo82a}. 
\textit{House monotonicity} prescribes that no party loses seats when the house size is increased; this directly corresponds to \textit{committee monotonicity} for approval-based committee voting rules.  
Whereas \seqP satisfies committee monotonicity by definition, the non-sequential variants \maxP and \varP fail the property. This is implicit already in \phrag's 1896 paper~\citep{Phra96a}, and stated explicitly in the paper by \citet{MoOl15a};
here is a simple example.

\begin{example}\label{ex:com-mon}
Let $C = \{a, b, c\}$ and 
consider the following profile with $10$ voters: 
\[ 2 \times \{ a\} \qquad 3 \times \{a,c\} \qquad 3 \times \{b,c\} \qquad 2 \times \{b\}\]

Both \maxP and \varP select $\{c\}$ for $k=1$ and $\{a,b\}$ for $k=2$.
\end{example}

The D'Hondt method and
the Sainte-Laguë method satisfy house monotonicity \citep[p.~100]{BaYo82a}.
Consequently, \maxP and \varP satisfy committee monotonicity on party-list profiles.
In contrast, the largest remainder method fails house monotonicity and, therefore, \eneP fails committee monotonicity even on party-list profiles.

\textit{Population monotonicity} prescribes that, if the ratio $\frac{v_i}{v_j}$ increases, then it should not be the case that~$z_i$ decreases and $z_j$ increases. 
Population monotonicity is satisfied by the D'Hondt method and
the Sainte-Laguë method, but not by the largest remainder method~\citep[p.~117]{BaYo82a}.
We are not aware of a direct generalization of this property to approval-based committee voting rules; however, it is similar in spirit to \textit{support monotonicity}, introduced by \citet{SaFi17a}, who showed positive results for \seqP and \maxP.

\section{Conclusion}

We have shown that \phrag's load-balancing methods satisfy interesting representation axioms. In particular, the polynomial-time computable variant \seqP satisfies PJR.
Moreover, both \maxP and \varP satisfy PR and \maxP additionally satisfies PJR. 
Arguably, \maxP is the first known example of a ``natural'' rule satisfying both PR and PJR---the only other rule known to satisfy these two properties is an artificial construct that returns a PR committee if one exists and otherwise runs PAV~\citep{SFF+17a}. 

Since \seqP violates EJR, it remains an open problem whether EJR is compatible with committee monotonicity.\footnote{In the approval-based apportionment setting, where candidates can obtain multiple seats in the committee, EJR and committee monotonicity can be achieved simultaneously~\citep{BGP+22a}.} 
Further, the intricate nature of \exref{ex:seqP-EJR} seems to suggest that instances on which \seqP violates EJR are rare. It would be interesting to see whether \seqP satisfies EJR for realistic distributions of preferences and/or for reasonable domain restrictions.\footnote{Recent experimental work by \citet{BredereckF0N19} showed that committees satisfying JR very often satisfy EJR as well, supporting the hypothesis that instances for which \seqP fails EJR are  rare. An overview of domain restrictions for approval preferences can be found in the survey by \citet{ElkindEtAlTRENDS2017}.}
Finally, it would be of great interest to find axiomatic characterizations of \phrag's rules, i.e., to find sets of axiomatic properties that uniquely define \maxP, \varP, \seqP, and \eneP.

\begin{acks}
We would like to thank Xavier Mora for many fruitful discussions and for providing us with copies of the original papers by \phrag.
We also thank Marie-Louise Lackner for pointing out essential literature and providing us with translations. Furthermore, we thank Vincent Conitzer, Edith Elkind, Dominik Peters, Jannik Peters, Luis S{\'a}nchez-Fern{\'a}ndez, and Piotr Skowron for helpful comments. 
We are thankful to the \emph{Institut Mittag-Leffler} for permitting the use of \phrag{}'s photograph.

This material is based on work supported by ERC-StG 639945, NSF IIS-1527434 and ARO W911NF-12-1-0550, by a Feodor Lynen return fellowship of the Alexander von Humboldt Foundation, by COST Action IC1205 on Computational Social Choice, by a grant from the Knut and Alice Wallenberg Foundation, by the Isaac Newton Institute for Mathematical Sciences (EPSRC Grant Number EP/K032208/1), by a grant from the Simons foundation, by the Deutsche Forschungsgemeinschaft (DFG) under grant BR 4744/2-1, and by the Austrian Science Foundation FWF, grant P31890.
\end{acks}

\bibliographystyle{abbrvnat}

\appendix
\section{Appendix}

\subsection{Proof of \thmref{thm:var-jr}}

We first prove a lemma.

	\begin{lemma}
	Let $0<\alpha<1$ and $(x_i)_{1\le i \le n}$ be a sequence with $0\leq x_i\leq \alpha$ for all $i \in \{1, \ldots, n\}$ and $\sum_{i=1}^n x_i = 1$.
	Then, $\sum_{i=1}^n x_i^2 \leq \alpha$.
	\label{lem:var-bound}
	\end{lemma}	

	\begin{proof}
$\sum_{i=1}^n x_i^2 \leq \sum_{i=1}^n \alpha x_i = \alpha$. 
	\end{proof}

We can now prove \thmref{thm:var-jr}.

\begin{proof}[\thmref{thm:var-jr}]
		Consider an instance $(A,k)$ and a committee $S$ output by \varP. Assume that $S$ does not satisfy JR. That is, there exists a group $N^*$ with $|N^*| \ge \frac{n}{k}$, such that $\bigcap_{i \in N^*} A_i \neq \emptyset$ and $|S \cap (\bigcup_{i \in N^*} A_i)| = \emptyset$. Clearly, $|N^*|<n$.
	
		Let $i'$ be a voter with maximum load (i.e., $\bar x_{i'} \ge \bar x_i$ for all $i \in N$), and let $c$ be a candidate with $x_{i',c}>0$. Such a $c$ must exist because the total load on $i'$ is non-zero. 
	
		First note that the average load on voters in $N \backslash N^*$ is \[\frac{k}{|N \backslash N^*|} \ge \frac{k}{n-\frac{n}{k}} = \frac{k^2}{kn-n}.\] 
		Therefore, since~$i'$ is a voter with maximum load, it must be the case that $\bar x_{i'} \ge \frac{k^2}{kn-n}$. Further, $\bar x_i = \bar x_{i'}$ for all voters $i \in N_c$. If this were not the case for some voter $i$, it would be possible to decrease the variance of the load distribution by reducing $x_{i',c}$ by some small amount and increasing $x_{i,c}$ accordingly, thus reducing the difference between the loads on $i'$ and $i$ while leaving all other loads unchanged, which reduces the variance.
	
		Let $d \in \bigcap_{i \in N^*} A_i$, and let $T = S \cup \{d\} \setminus \{c\}$. That is, $T$ is the committee obtained by starting with $S$ and replacing $c$ with a candidate approved by all voters in $N^*$. To complete the proof, we consider the effect that this replacement has on the quanity $\sum_{i \in N} \bar x_i^2$, which is the objective minimized by \varP.
	
It is possible to distribute the load of candidate $d$ evenly across all
(previously unrepresented) voters in $N^*$. Therefore, the addition of $d$
contributes at most $\sum_{i \in N^*} \frac{1}{|N^*|^2} =
\frac{1}{|N^*|} \leq \frac{k}{n}$ to the objective. 
		On the other hand, removing $c$ from the committee decreases the objective by 
		\begin{align*}
			\sum_{i \in N_c} &(\bar x_i^2 - (\bar x_i - x_{i,c})^2) = \sum_{i \in N_c} (\bar x_i^2 - \bar x_i^2 + 2\bar x_i x_{i,c} - x_{i,c}^2)
			= \sum_{i \in N_c} (2 \bar x_ix_{i,c}-x_{i,c}^2)\\
			&= \sum_{i \in N_c} (2 \bar x_{i'}x_{i,c}-x_{i,c}^2)
			= 2 \bar x_{i'} - \sum_{i \in N_c} x_{i,c}^2
			\ge 2\bar x_{i'} - \bar x_{i'} 
			= \bar x_{i'}
			\ge \frac{k^2}{kn-n} > \frac{k}{n} \text,
		\end{align*}
		where the first inequality follows from Lemma~\ref{lem:var-bound}.
	Therefore, replacing $c$ by $d$ causes a net decrease to the objective, contradicting minimality of the variance of committee $S$. 
	We have thus obtained a contradiction to our assumption that $S$ does not provide JR. 
    \end{proof}

\subsection{\seqP violates EJR}
\label{app:EJRcalculation}

Table \ref{tbl:calc} shows the necessary calculations for computing \seqP in \exref{ex:seqP-EJR}.

\begin{table*}[htb] \footnotesize
\label{tbl:EJRcalculation}
\centering
\resizebox{\columnwidth}{!}{%
\begin{tabular}{ccccccccccccc} 
	\toprule
	$c$ & $s_c^{(1)}$ & $s_c^{(2)}$ & $s_c^{(3)}$ & $s_c^{(4)}$ & $s_c^{(5)}$ & $s_c^{(6)}$ & $s_c^{(7)}$ & $s_c^{(8)}$ & $s_c^{(9)}$ & $s_c^{(10)}$ & $s_c^{(11)}$ & $s_c^{(12)}$ \\ 
	\midrule
	$c_1$ & 0.125 & 0.163 & 0.2 & 0.238 & 0.275 & 0.310 & 0.332 & \textbf{0.369} & -- & -- & -- & -- \\
	$c_2$ & 0.077 & 0.119 & 0.162 & 0.204 & \textbf{0.246} & -- & -- & -- & -- & -- & -- & --\\
	$c_3$ & \textbf{0.05} & -- & -- & -- & -- & -- & -- & -- & -- & -- & -- & --\\
	$c_4$ & 0.05 & \textbf{0.1} & -- & -- & -- & -- & -- & -- & -- & -- & -- & --\\
	$c_5$ & 0.05 & 0.1 & \textbf{0.15}& -- & -- & -- & -- & -- & -- & -- & -- & --\\
	$c_6$ & 0.05 & 0.1 & 0.15 & \textbf{0.2} & -- & -- & -- & -- & -- & -- & -- & --\\
	$c_7$ & 0.05 & 0.1 & 0.15 & 0.2 & 0.25 & \textbf{0.275} & -- & -- & -- & -- & -- & --\\
	$c_8$ & 0.05 & 0.1 & 0.15 & 0.2 & 0.25 & 0.275 & \textbf{0.325} & -- & -- & -- & -- & --\\
	$c_9$ & 0.05 & 0.1 & 0.15 & 0.2 & 0.25 & 0.275 & 0.325 & 0.375 & \textbf{0.388} & -- & -- & -- \\ 
	$c_{10}$ & 0.05 & 0.1 & 0.15 & 0.2 & 0.25 & 0.275 & 0.325 & 0.375 & 0.388 & \textbf{0.438} & -- & --\\
	$c_{11}$ & 0.05 & 0.1 & 0.15 & 0.2 & 0.25 & 0.275 & 0.325 & 0.375 & 0.388 & 0.438 & \textbf{0.488} & --\\
	$c_{12}$ & 0.05 & 0.1 & 0.15 & 0.2 & 0.25 & 0.275 & 0.325 & 0.375 & 0.388 & 0.438 & 0.488 & \textbf{0.538}\\
	$a$ & 0.25 & 0.25 & 0.25 & 0.25 & 0.25 & 0.373 & 0.373 & 0.373 & 0.558 & 0.558 & 0.558 & 0.558\\
	$b$ & 0.25 & 0.25 & 0.25 & 0.25 & 0.25 & 0.373 & 0.373 & 0.373 & 0.558 & 0.558 & 0.558 & 0.558\\ \bottomrule
\end{tabular}
}
	\caption{The values $s_c^{(j)}$ (rounded to three decimal places) for each remaining candidate $c \in C$ and each round $j \in \{1, \ldots, 12\}$ in  Example~\ref{ex:seqP-EJR}. Entries in bold distinguish the candidate with lowest $s_c^{(j)}$ value (up to tie-breaking) in each round.}
	\label{tbl:calc}
	\end{table*}

\end{document}